\newtheorem{theorem}{Theorem}
\newtheorem{lemma}[theorem]{Lemma}
\newtheorem{claim}{Claim}
\newtheorem{fact}{Fact}
\newtheorem{assumption}{Assumption}
\newtheorem{example}{Example}
\theoremstyle{definition}
\newtheorem{definition}{Definition}
\theoremstyle{remark}
\newcommand{\etal}{\emph{et al.}}
\title{Dominant Resource Fairness with Meta-Types}
\author{
    Zhi-Hua Zhou
    \affiliations
    Nanjing University
    \emails
    pcchair@ijcai-21.org
}
\author{
Steven Yin\footnote{Contact Author}\and
Shatian Wang\and
Lingyi Zhang\And
Christian Kroer\\
\affiliations
IEOR Department, Columbia University
\emails
\{sy2737, sw3219, lz2573, christian.kroer\}@columbia.edu
}
\begin{document}

\maketitle

\begin{abstract}
Inspired by the recent COVID-19 pandemic, we study a generalization of the multi-resource allocation problem with heterogeneous demands and Leontief utilities. Unlike existing settings, we allow each agent to specify requirements to only accept allocations from a subset of the total supply for each resource. These requirements can take form in location constraints (e.g. A hospital can only accept volunteers who live nearby due to commute limitations). This can also model a type of substitution effect where some agents need 1 unit of resource A \emph{or} B, both belonging to the same meta-type. But some agents specifically want A, and others specifically want B. We propose a new mechanism called Dominant Resource Fairness with Meta Types which determines the allocations by solving a small number of linear programs. The proposed method  satisfies Pareto optimality, envy-freeness, strategy-proofness, and a notion of sharing incentive for our setting. To the best of our knowledge, we are the first to study this problem formulation, which improved upon existing work by capturing more constraints that often arise in real life situations. Finally, we show numerically that our method scales better to large problems than alternative approaches. 
%this is the first mechanism to achieve all four properties in our setting
%Finally, although motivated by the problem of medical resource allocation in a pandemic, our mechanism can be applied more broadly to resource allocation under Leontief utilities with accessibility constraints. 
\end{abstract}

\section{Introduction}
The recent COVID-19 pandemic has brought forward many problems that
are particularly relevant to the operations research and computer science communities. Among them, an often overlooked problem is
the effective and fair allocation of resources, such as volunteer medical
workers, ventilators, and emergency field hospital beds.

There are several key challenges to the medical resource allocation
problem in the face of an infectious disease outbreak. First, utilities from
different types of resources are not additive nor linear. For example,
when there are enough nurses but not enough doctors, the marginal utility
of having one additional nurse on staff is very low.
Second, not all resources are accessible to all hospitals (referred to henceforth as \textit{accessibility constraints}). For instance, the home location of each volunteer medical
worker determines where she can commute to work; thus, she can only be
assigned to hospitals within her commutable radius. Third, hospitals have
different capacities (big medical centers versus small hospitals) and are in different
stress levels (hospitals in an epicenter versus the ones in rural areas with few cases), so they should
naturally be prioritized differently.

Another setting that has the above characteristics is the compute resource sharing problem
with sub-types. For example, suppose a compute server has
several compute nodes, and there are different types of GPU/CPU on the various
nodes (e.g. NVIDIA vs. AMD GPU, Intel vs. AMD CPU). Some users look for specific hardware configurations (e.g. accept only Intel CPU) while
others might be less selective (e.g. accept all CPU brands).

In this paper, we propose a new market mechanism that tackles the three
challenges outlined above and achieves desirable  properties
including Pareto optimality, envy-freeness, 
strategy-proofness, and sharing incentive. 
% Under an additional condition, our mechanism also satisfies proportionality (in our setting, there is a  subtle difference between sharing incentive and proportionality. This will be discussed when we formally define these properties). 
In our numerical experiments, we demonstrate that
compared to the Maximum Nash Welfare (MNW), and the Discrete MNW approach, our mechanism is cheaper to compute 
% to implement in practice, 
and enjoys theoretical properties that MNW approaches do not have.

% \paragraph{Organization}
% The remainder of the paper is organized as follows. 
% In Section \ref{sec:relatedwork}, we summarize related literature. 
% In Section \ref{sec:problem_formulation}, we formally formulate our problem.
% In Section \ref{algorithm}, we describe our proposed mechanism in detail and prove the four desiderata. 
% In Section \ref{experiments} we numerically compare our proposed mechanism to the
% MNW approach.

\section{Related Work}
\label{sec:relatedwork}
Recently, a flurry of papers have come out of the operations
research, statistics, and computer science communities addressing
% various aspects of the pandemic. 
% We focus on literature that deals with
% emergency 
resource allocations in the aftermath of a pandemic. 
% , as well as
% general fair division techniques that are most relevant in our setting.
For instance,
% \subsection{Emergency resource allocation}
% Much work has
% been done to address the various supply shortages caused by COVID-19. Many
% studied medical resource allocations from a qualitative perspective,
% addressing ethical and medical questions in a
% pandemic~\cite{barr2008ethical,emanuel2020,truog2020}. 
% From a mechanism
% design view point, 
Jalota \etal~\shortcite{Jalota2020MarketsFE} proposed a mechanism for allocating public goods that are capacity constrained due to social
distancing protocols, focusing on achieving a market clearing outcome.
Mehrotra \etal~\shortcite{Mehrotra2020AMF} studied the allocation of ventilators under a stochastic
optimization framework, minimizing the expected number of
shortages in ventilators while also considering the cost of
transporting ventilators.
% ~\cite{kanter2015would,powell2008allocation} provided guidelines for deciding whether a patient should be allocated a ventilator.
Zenteno~\shortcite{zenteno2013models} combined influenza
modeling techniques with robust optimization to handle workforce shortfall in
a pandemic.
% ~\cite{xiang2016disaster} modeled the deteriorating health of
% victims using a stochastic model and optimized for the expected number of
% survivals.
% ~\cite{arora2010resource} studied the trade-offs between building
% redundant capacity and using mutual aid.~\cite{boreskie2020age} proposed guidelines for physicians when deciding if a patient should be allocated
% an ICU unit. 
% All of these papers focus on optimizing some objective function but do not address fairness. 
These papers differ from our work in that they do not explicitly address any fairness considerations that we study in this paper.
% Our work serves as a complement to the existing work.

There has also been a growing interest in developing resource allocation mechanisms with fairness properties.
% \subsection{General resource allocation}
Under a fairly general class of utility functions including the Leontief
utility, computing market equilibrium under the fisher market setting
(divisible goods) can be done using an Eisenberg-Gale (EG) convex program 
Eisenberg and Gale~\shortcite{eisenberg1959consensus}. Market equilibrium solutions satisfy Pareto
optimality, proportionality, and envy-freeness. It is also known that an EG convex
program implicitly maximizes Nash welfare, which is the product of all
agents' utilities. However, MNW is generally
% known to 
not strategy-proof, and
can be computationally expensive for large problems. 
Chen \etal~\shortcite{linearComboLtf} studied the computation and approximation of market equilibria under the so-called ``hybrid Linear-Leontief'' utilities. They assume that there are $m$ disjoint groups and each group contains several types of resources. An agent's utility is a linear combination of $m$ Leontief utilities, each associated with a group. Note that although we use the terms ``meta-type'' and ``group'' in our problem formulation, our setting is different from theirs because there is only one Leontief utility per user in our setting.

For Leontief utilities, Ghodsi \etal~\shortcite{DRF} introduced the Dominant Resource Fairness
allocation mechanism (DRF) which in addition to the three properties satisfied by
market equilibrium solutions, is also strategy-proof. Later
Parkes \etal~\shortcite{beyondDRF} extended the setting to allow agents to be weighted and have zero demand
over some resources while maintaining all four desiderata.
% Our problem does not fit into their setting directly because of the group
% structures of our demand: two agents might both demand the same type of
% resource, but each is allowed to only receive allocation from two different,
% and potentially intersecting subsets of the same resource pool. 
Our paper generalizes the setting further by allowing accessibility constraints, which as mentioned in the introduction, arise naturally in many practical settings.
% Also, a naive adaptation of this generalized DRF mechanism to our setting does not yield Pareto optimal allocations in general, which we will discuss later in the paper.
% does not yield a Pareto optimal mechanism.

For indivisible resources, Caragiannis \etal~\shortcite{unreasonable_fairness} showed that
maximizing Nash welfare with integer constraints (Discrete MNW) satisfies envy-freeness up
to one resource unit and has nice guarantees on the Max-Min Share ratio. However, similar to MNW, it is not strategy-proof and as we will see in the numerical section, it does not scale well to large number of agents and resource types. Although
exact market equilibrium might not exist in indivisible settings,
Budish~\shortcite{aceei_theory} showed that a close approximation of it exists in the
unweighted, binary allocation case. This was later put into practice for course
allocation in Budish \etal~\shortcite{course_match}. However, the theory does not provide useful
approximation bounds when assignments are not binary (e.g., each student only
needs one seat from each class, but each hospital may require hundreds of
doctors), and therefore it is not well suited to our setting.
% Furthermore, the existing computational approaches do not scale to
% the size of the problems encountered in pandemic situations, where tens of
% thousands of volunteers need to be allocated to hundreds of hospitals.

\section{Problem Formulation}
\label{sec:problem_formulation}
For the remainder of the paper we use local medical personnel allocation as
a running example, even though other resource allocation problems can be
formulated in a similar fashion.
We group resources into \emph{meta-types}: doctors, nurses,
ventilators, emergency field hospital beds, etc. Within each meta-type (e.g.
doctors), we have \emph{types} (e.g. doctors from the Bronx, doctors from
Brooklyn, doctors from Manhattan, etc.\footnote{Manhattan, the Bronx, and Brooklyn are three boroughs of New York City.}). We assume that demands
are given over meta-types, but each agent sometimes can only receive
allocation from a subset of the resources in a meta-type because of
constraints such as location (e.g. a hospital is indifferent to where doctors
assigned to it come from as long as they are within commutable radius).
We refer to such subsets of resource types in each meta-type as the agents' demand \emph{groups}.

Let $\Omega_1, \Omega_2, \ldots, \Omega_L$ denote the
meta-types. Each meta-type $\Omega_l$ is a collection of resource types. We assume that $\Omega_i \cap \Omega_j = \emptyset$ for
any two different meta-types $i,j$, which means that each resource type belongs to exactly one meta-type. Let $R$ denote the set of all resource types: $R =
\cup_{l \in [L]}\Omega_l$, and $N$ denote
the set of agents. We use $m = |R|$ and $n = |N|$ to denote the total number of resource types and agents.
% pair such that $i, j \in [L] = \{1, \ldots, L\}$ and $i \neq j$. 
% We denote $R =
% \cup_{l \in [L]}\Omega_l$; namely, $R$ is the set of all resource types and
% each resource type belongs to only one meta-type. 
Each type of resource $r$ has a finite supply of $S_r$. We assume that the supplies are normalized within each meta-type: 
%For each resource type $r \in \Omega_l$, use $S_r$ to denote the \emph{proportion} of available units among meta-type $l$ that are specifically of type $r$. By definition, 
$$\sum_{r \in \Omega_l}S_r = 1\;\; \forall l \in [L].$$ 
Each agent $i \in
N$ submits a demand vector $[d_{i1}, \ldots, d_{iL}]$ where $d_{il}$ denotes the 
% amount 
fraction of
meta-type $l$ that agent $i$ needs in order to get one unit of utility (one can think of this as each agent trying to  complete as many units of work as possible, where each unit of work requires $d_{il}$ units of meta-type $l$). 
% Note that since the total supply of meta-type $l$ is normalized to $1$, $d_{il}$ represents a fraction of the total supply of $l$. 
Additionally, 
% each agent also submits a group structure constraint in the form of a set of demand groups.
each agent has access to only a subset of resource types within each meta-type. We represent this accessibility constraint in the form of a set of demand groups.
Let $G_i =
\{g_l^{i} \subseteq \Omega_{l}: l \in [L], d_{il} > 0\}$, be the set of demand groups for agent $i$, where $g_l^{i} \subseteq \Omega_{l}$ is agent $i$'s demand group for $l$, specifying the subset of resource types belonging to meta-type $l$ that agent $i$ can access. 
Note that we only include in $G_i$ meta-types that agent $i$ has non-zero demand of. This is to simplify notation in the later analysis.
% Note that we exclude meta-types that the agent does not need to simplify notations later on.
Intuitively, the introduction of meta-types models the substitution effects, and the introduction of demand groups models the accessibility constraints.
When $i$ is clear from the context, we sometimes use $g_l$ instead of $g_l^i$
to simplify the notation.

% We also allow each agent $i$ to have a different priority weight $w_{il}$ for each meta-type $l$. 
Following the setup in Parkes \etal~\shortcite{beyondDRF}, we also allow agents to be weighted differently for \emph{each} meta-type and we denote the weight of agent $i$ for meta-type $l$ as $w_{il}$. Having different weights for different meta-types makes the model more expressive: if we let $w_{i1}=\ldots=w_{iL}$, then this reduces to having a single priority weight for each agent.
This weight can depend on factors such as agent $i$'s contribution to the resource pool of meta-type $l$, as well as the size and stress level of agent $i$. 
In the case of medical supplies allocation, weights can represent how much each hospital is in need of extra resources. In the cloud compute setting, weights can represent how much money each user has paid for each meta-type of resource. Note that weights are fixed apriori, not self reported by the agents, nor determined by the allocation algorithm. 
We assume that weights are normalized within each meta-type: $\sum_{i\in N} w_{il} = 1$ for $l \in [L]$. 
% Now define
% \begin{equation*}
% l_i^* \coloneqq \arg\min\limits_{l \in [L]} \frac{w_{il}}{d_{il}} \;\;\;\; d_{i*} \coloneqq d_{il^*_i} \;\;\;\; w_{i*} \coloneqq w_{il^*_i}
% \end{equation*}
% Namely, $l_i^*$ is the meta-type from which agent $i$
% demands the biggest proportional share, adjusted by the priority weights. We refer to $l_i^*$ as the
% \emph{dominant resource
% meta-type} for agent $i$. $d_{i*}$ is the proportional share demanded by agent $i$ from its dominant resource
% meta-type to finish one unit of work.
Note that weights represent agents' priorities over the meta-types, not agents' preferences. Therefore they do not appear in the agents' utility functions, as we will define next. 

Let $x_{i}$ be the allocation vector of agent $i$: $x_{ir}$ represents the assignment of resource type $r$ to $i$. For each meta-type~$l$, $\sum_{r\in g^i_l} x_{ir}$ is the fraction of the total supply of meta-type $l$ that is assigned to agent $i$.   
%and $x$ to denote the assignment matrix that encodes the allocations vectors of all agents. 
The utility of agent $i$ is then defined as:
\begin{equation}
u_i(x_i) \coloneqq \min\limits_{g_l\in G_i}\left\{ \frac{1}{d_{il}} \sum\limits_{r\in g_l} x_{ir} 
 \right\}
\label{eq:leontief utility}
\end{equation}
Since agent $i$ needs $d_{il}$ fraction of each meta-type $l$ to finish one unit of work, $u_i(x_i)$ is the total units of work that agent $i$ can finish given
allocation vector $x_i$. This form of utility measure is called
the \emph{Leontief utility}. 

We now give a concrete example, which is also illustrated in Figure \ref{fig:example}. To keep the example simple we assume that each agent has the same weight over all meta-types: only one weight $w_i$ is defined for each agent $i$.
\begin{example}
\label{ex:example}
\begin{figure}
    \centering
    \includegraphics[width=0.7\linewidth]{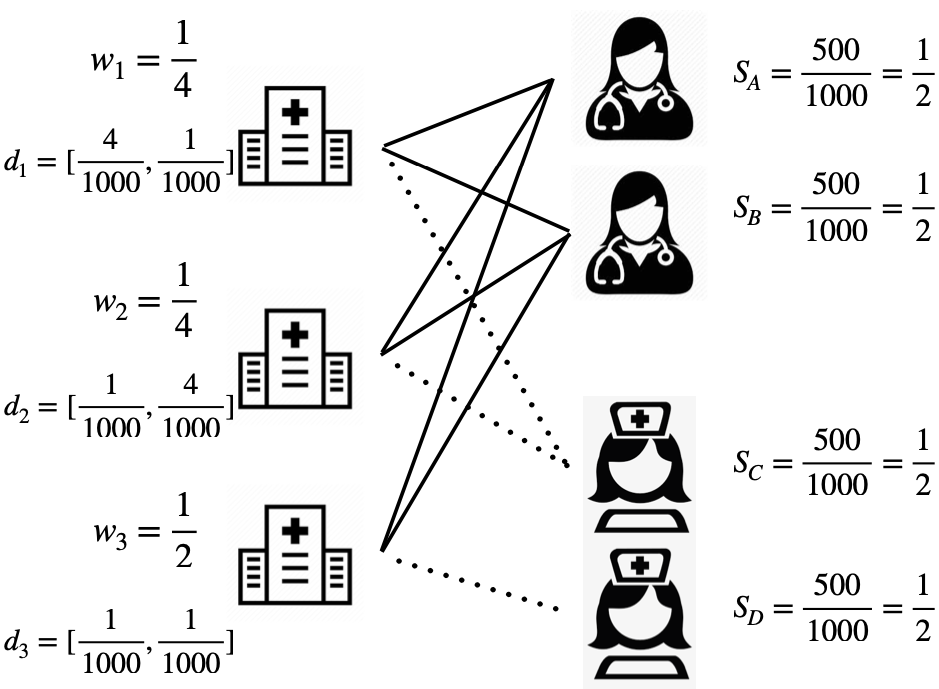}
    \caption{ All three hospitals can accept both types of doctors. However, hospitals I and II can only accept Nurse type C, while hospital III accepts only Nurse type D. }
    \label{fig:example}
\end{figure}

Consider a case of three hospitals (agents) $\{1, 2, 3\} $ and two resource meta-types. The first meta-type consists of two types of doctors (resource $A, B$), and the second consists of two types of nurses (resource $C, D$):
$\Omega_1 = \{A, B\}, \Omega_2=\{C, D\}$. The normalized weights for the three hospitals are: $w_1 = w_2 = \frac{1}{4}, w_3 = \frac{1}{2}$. The supply for each type of doctor and nurse is 500. Thus, the total available units of each meta-type is $500+500 = 1000$, and $S_r = \frac{500}{1000} = \frac{1}{2}\quad \forall r \in \{A, B, C,D\}$. All three hospitals have access to both types of doctors but hospitals $1,2$ only have access to nurse type $C$ while the third hospital only has access to nurse type $D$: $G_1=\{ g_1^1=\{A, B\}, g_2^1=\{C \} \}, G_2=\{ g_1^2=\{A, B\}, g_2^2=\{C \} \}, G_3 = \{g_1^3=\{A, B\}, g_2^3=\{D\}\}$. For each unit of utility, hospital 1 demands 4 doctors and 1 nurse, hospital 2 demands 1 doctor and 4 nurses, and hospital 3 demands 1 doctor and 1 nurse. Since the total units of supply for each meta-type is $1000$, 
$d_1 = [\frac{4}{1000},\frac{1}{1000}]$, $d_2 = [\frac{1}{1000},\frac{4}{1000}]$, $d_3 = [\frac{1}{1000}, \frac{1}{1000}]$.

\end{example}

\subsection{Desirable Properties}
\label{subsec:fairness definition}
Now we formally define the properties studied in this paper.
\paragraph{Pareto optimality.} 
% An allocation mechanism is Pareto optimal if it returns an allocation $x$ such that for any other allocation $x'$, if there exists $i \in N$ such that $u_i(x'_i)>u_i(x_i)$, then there must exist $j \in N$ such that $u_j(x'_j) < u_j(x_j)$. That is, an allocation is Pareto optimal if there does not exist another allocation where some agent is strictly better off without some other agent being strictly worse off.
An allocation mechanism is Pareto optimal if compared to the output allocation $x$, there does not exist another allocation $x'$ where some agent is strictly better off without some other agent being strictly worse off: $\exists i \;s.t.\; u_i(x'_i) > u_i(x_i) \implies \exists j \; s.t.\; u_j(x'_j) < u_j(x_j)$.

\paragraph{Weighted envy-freeness.}
Given an allocation $x_j$ for agent$j$, let $\tilde x_i$ be the same allocation adjusted to agent~$i$'s weights and demand groups, i.e., $\tilde x_{ir} = x_{jr}\frac{w_{il}}{w_{jl}}$ for all $r\in g_l^i, l\in[L]$, and $\tilde x_{ir} =0$ otherwise. 
$u_i\left(\tilde x_i\right) - u_i(x_i)$ is how much $i$ envies $j$. An allocation is weighted envy free if for any $i, j \in N$ this quantity is non-positive, i.e.,
    $$u_i\left(\tilde x_i\right) - u_i(x_i) \leq 0.$$
Intuitively, this means an agent prefers her allocation over the allocation of any other agent scaled by the weight ratios of the two agents. Note that since there is a separate weight for every meta-type $l$, the allocations for each resource type $r$ is scaled according to the corresponding weight for the meta-type that it belongs to.

\paragraph{Strategy-proofness.} In the existing literature, agents can only be
strategic by misreporting their demand vector. In our setting however,
agents have the additional possibility of misreporting their
accessibility constraints for the meta-types
(e.g. One can report that she accepts both Intel
and AMD CPUs but in fact her program only runs on Intel CPU). Our definition
of strategy-proofness guards against both types of misreporting.
Let $x$ be the allocation returned by the
mechanism under truthful reporting from all agents. Let $x'$ be an allocation
returned by the mechanism when all agents report truthfully except agent $i$
reports an alternative demand vector and/or alternative demand groups.
The mechanism is strategy-proof if $u_i(x_i) \geq u_i(x'_i)$.

\paragraph{Sharing incentive.}
In settings where the supplies for each resource come from the participants' contribution, sharing incentive is satisfied when the resulting allocation gives each participant at least as much utility as she originally had. 
% Note that agents don't necessarily have access to the resource supplies they bring themselves. 
More specifically, for each $i\in N$ and $l\in [L]$, let $s_{il}$ be the proportion of meta-type $l$ contributed by agent $i$ \emph{that she can also access}. We can also think of $s_{il}$ as the amount of ``useful'' resource agent $i$ originally possessed of meta-type $l$ (she might contribute more than $s_{il}$ to the pool). Prior to reallocation of resources, agent $i$'s utility would be
$$u_i^o := \min_{g_l \in G_i} \left \{\frac{s_{il}}{d_{il}} \right \}.$$
Sharing incentive says that $u_i(x_i) \geq u_i^o \;\; \forall i \in N$, where $x_i$ is the output allocation of the algorithm (i.e., all agents have incentives to share (pool) their individual resources for reallocation).\footnote{A closely related concept called proportionality is also often seen in the literature. We focus on Sharing Incentive in the main paper and include a discussion of proportionality in the Appendix.} 
% We also define a related concept called proportionality:

% \paragraph{Proportionality}
% An allocation $x$ satisfies proportionality if $u_i(x_i) \geq u_i(x'_i)$ for all $i$, where $x'_{ir} = w_{il} S_r$ for each $r \in g_l^i$ and $l\in [L]$. $u_i(x')$ can be explicitly written out as 
% $$u_i(x') := \min_{g_l \in G_i} \left \{\frac{w_{il}}{d_{il}} \sum_{r\in g_l}S_{r} \right \}.$$

% \begin{remark}
% In the existing resource allocation literature, sharing incentive and proportionality are often used interchangeably. Indeed, when the priority weights are set according to the agents' accessible contributions to the resource pool ($w_{il} = s_{il}$), the two notions are equivalent in settings where there are no accessibility constraints. With accessibility constraints, however, they are not the same. Under our definition of proportionality, the amount of accessible resource meta-type $l$ that agent $i$ receives is $s_{il}\times \sum_{r\in g_l^i}S_{r}$. Since $\sum_{r\in g_l^i}S_{r}<1$ if agent $i$ cannot access the entire supply of meta-type $l$, $u_i(x')$ can be smaller than $u_i^o$. Therefore when priority weights are set according to agents' contributions, proportionality is a weaker notion than sharing incentive. However, since proportionality can be defined for arbitrary weights, regardless of whether or not we are in a setting where agents bring their own supplies, it is a more flexible concept. We focus on Sharing Incentive in the main paper and provide a detailed discussion of proportionality in the Appendix.
% \end{remark}

\section{Dominant Resource Fairness with Meta-Types}
\label{algorithm}
Before describing the algorithm we first define some key concepts used in the algorithm:
\begin{equation*}
l_i^* \coloneqq \arg\min\limits_{l \in [L]} \frac{w_{il}}{d_{il}} \;\;\;\; d_{i*} \coloneqq d_{il^*_i} \;\;\;\; w_{i*} \coloneqq w_{il^*_i}
\end{equation*}
Namely, $l_i^*$ is the meta-type from which agent $i$
demands the biggest proportional share, adjusted by her priority weights. We refer to $l_i^*$ as the
\emph{dominant resource
meta-type} for agent~$i$. $d_{i*}$ is the proportional share demanded by agent $i$ from its dominant resource
meta-type to finish one unit of work.

We now present our fair allocation mechanism, which we call \emph{Dominant Resource Fairness with Meta-Types} (DRF-MT).
The mechanism proceeds in rounds and agents are gradually ``eliminated''. In each round $t$, we use the linear program described in \eqref{opt prob} to maximize a fractional value $y_t$ so that each \emph{remaining agent} $i$ ($i\in N_t$) receives at least $y_t \times w_{i*}$ fraction of the total supply from its' dominant resource meta-type $l_i^*$, and more generally $y_t\times w_{i*}\times d_{il}/d_{i*}$ of each demanded meta-type $l$.\footnote{We also considered an alternative design that did not yield a strategy-proof and envy-free algorithm. See Appendix~D
% ~\ref{sec: alternative algo design}
.}
Based on this solution, 
% we eliminate one or more resources using Definition \ref{eliminated resource}. Then, among the remaining agents, we eliminate those who
% have at least one demand group whose resources within have all been eliminated
% (Definition \ref{eliminated agent}). 
we eliminate at least one resource and one agent using Definition~\ref{eliminated resource} and ~\ref{eliminated agent} (although the algorithm only needs to explicitly maintain a list of active/eliminated agents, not resources).
For each agent $i$ eliminated in round $t$, we set $\gamma_i = y_t$. We fix the fraction of dominant meta-type $l_i^*$ assigned to agent $i$ to  $\gamma_i \times w_{i*}$, \emph{without fixing the specific allocations of the
resources}. We first observe the following fact (proof is in the Appendix):
\begin{fact}
\label{fact:tightallocation}
    In any round $t$ of Algorithm~\ref{extendedDRF}, the allocation constraints in Equation \ref{opt prob} for $i\notin N_t$ are tight for optimal solutions.
\end{fact}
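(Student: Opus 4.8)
The plan is to prove Fact~\ref{fact:tightallocation} by an exchange argument on the linear program \eqref{opt prob}. Fix a round $t$ and let $(y_t, x)$ be an optimal solution. For each eliminated agent $i \notin N_t$ and each of its demanded meta-types $l$, the program contains an allocation constraint $\sum_{r \in g_l^i} x_{ir} \ge \gamma_i\, w_{i*}\, d_{il}/d_{i*}$, where $\gamma_i$ was frozen in the round that removed $i$. I would suppose, toward a contradiction, that some such constraint is slack, say $\sum_{r \in g_l^i} x_{ir} > \gamma_i\, w_{i*}\, d_{il}/d_{i*}$ for a particular $i \notin N_t$ and meta-type $l$.

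First I would localize the slack to a single resource type. Because the left-hand side strictly exceeds its (nonnegative) bound, there is some $r \in g_l^i$ with $x_{ir} > 0$, and since the meta-types are disjoint, lowering $x_{ir}$ by a small $\delta > 0$ affects only the meta-type-$l$ constraint of $i$; for $\delta$ small enough this constraint stays satisfied. The reduction frees $\delta$ units of supply of resource $r$, which is the quantity I will try to exploit.

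The core step is to turn this freed supply into a strict increase of the objective $y_t$. I would hand the $\delta$ units of $r$ to the remaining agents $i' \in N_t$ that can access $r$ and then scale up all remaining agents' allocations uniformly. This is where the Leontief structure helps: every remaining-agent constraint is driven by the single scalar $y_t$ (meta-type $l'$ demands exactly $y_t\, w_{i'*}\, d_{i'l'}/d_{i'*}$), so if the freed resource relaxes a binding supply constraint, the common multiplier $y_t$ can be pushed up, contradicting optimality.

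The main obstacle is the degenerate case in which resource $r$ is accessible to no remaining agent, so redistribution cannot raise $y_t$. Here the claim is best understood as the existence of a tight optimal solution: because the supply constraints are inequalities $\sum_i x_{ir} \le S_r$, I can simply discard the freed $\delta$, obtaining a feasible solution with the same optimal value but with $i$'s meta-type-$l$ constraint now tight. Iterating this trimming over all slack eliminated constraints yields an optimum in which every allocation constraint for $i \notin N_t$ holds with equality --- exactly what is needed to justify fixing those fractions at $\gamma_i\, w_{i*}$ in later rounds. The one step that requires care is verifying that the trimming can be performed for all eliminated agents simultaneously, without reintroducing slack elsewhere and without ever lowering $y_t$.
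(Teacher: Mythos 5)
There is a genuine gap, and it is located exactly where you place the ``core step.'' Freeing $\delta$ units of a single resource $r$ does not in general let you push up $y_t$: the remaining agents who can access $r$ may already be bottlenecked on a different meta-type, $r$'s supply constraint may already be slack, or $r$ may be accessible to no active agent at all. So the contradiction with LP optimality that your argument relies on simply does not materialize, and you are pushed into your ``degenerate case'' essentially always. But the fallback you propose there --- discard the freed supply and trim --- only establishes the \emph{existence} of an optimal solution in which the eliminated agents' constraints are tight. That is strictly weaker than what the fact is used for: the paper needs the eliminated agent's utility to be pinned at $\gamma_i w_{i*}/d_{i*}$ in \emph{every} optimal solution of every later round (the algorithm hands the final round's LP to a solver and takes whatever optimum comes back), which requires that in every optimal solution at least the bottleneck allocation constraint of each eliminated agent holds with equality.

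The missing ingredient is that the contradiction must come from the \emph{elimination structure}, not from the objective. Agent $i$ was eliminated in some earlier round precisely because one of her demand groups $g_{l_0}$ ended up with $g_{l_0}\cap R_{t+1}=\emptyset$, i.e., every $r\in g_{l_0}$ is an eliminated resource, meaning $\sum_j x_{jr}=S_r$ in \emph{every} optimal solution from that round on (Definitions~\ref{eliminated resource} and~\ref{eliminated agent}, together with Fact~\ref{fact:monotonicity of lp over rounds}). The paper's proof (folded into the proof of Lemma~\ref{lemma: PO}) argues: if all of $i$'s allocation constraints were slack at some optimal solution of a later round, scale $x_i$ down until one becomes tight; since $\sum_{r\in g_{l_0}}x_{ir}>0$, this de-saturates some eliminated $r\in g_{l_0}$ while preserving feasibility and the objective value, contradicting the fact that $r$ is saturated in every optimal solution. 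Your write-up never invokes the definitions of eliminated resources or eliminated agents, which is where the whole argument lives; without them, no contradiction is available and only the weak existence statement survives.
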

This fact implies that when an agent is eliminated, her utility in the final allocation is fixed, even though the exact allocation is not. Not fixing the allocation is a deliberate algorithmic design choice because agents who are flexible with their demand groups should accommodate agents who are more restrictive (e.g. if agent~1 accepts both type A and B, and agent~2 only accepts type A, then we should allocate agent 1 mostly type B resource, and leave type A resource for agent~2). When the number of agents and resource types is large, it is difficult to characterize such dynamics explicitly. So it is crucial to not fix the allocation to the agents until the last iteration.

% This differs from the algorithm in \cite{beyondDRF}, which fixes the specific resource allocation in each round. Fixing $\gamma_i$ while keeping the allocation flexible is crucial for ensuring Pareto optimality. 

We will show that there is at least one new
resource and one agent being eliminated in each round. Thus our algorithm requires at most $\min(m, n)$ rounds (in practice it often terminates in 2-3 rounds even with a large number of resources types and agents). Since each round requires solving a polynomial-sized linear program, the overall procedure can be run in polynomial time.

Let $N_t, R_t$ be the set of active agents and resources at the
beginning of round $t$. The LP for round $t$ is defined in \eqref{opt prob}. Note that the ratio $\frac{d_{il}}{d_{i*}}$ is simply making sure that there is no waste in the allocation. For an agent who has been eliminated, $\frac{\gamma_iw_{i*}}{d_{i*}}$ is her final utility. If agent $i$ is not yet eliminated after round $t$, then $\frac{y_tw_{i*}}{d_{i*}}$ represents how much utility she is currently guaranteed to receive (it will never decrease in later rounds, see Fact \ref{fact:monotonicity of lp over rounds}).
\begin{align}
    &\max \; y_t &\nonumber\\
    \text{s.t. }  & \text{(active agents allocation constraints)}\nonumber\\
    & y_t \times w_{i*}\times\frac{d_{il}}{d_{i*}} \leq \sum\limits_{r\in g_l} x_{ir} \quad \forall i\in N_t, g_l\in G_i \nonumber\\
    & \text{(eliminated agents allocation constraints)}\nonumber\\
    &\gamma_i \times w_{i*}\times \frac{d_{il}}{d_{i*}} \leq \sum\limits_{r\in g_l} x_{ir} \quad \forall i \not\in N_t, g_l\in G_i &\label{opt prob}\\
    & \text{(supply constraints)}\nonumber\\
    &\sum\limits_{i\in N} x_{ir} \leq S_r \quad\forall r\in R \nonumber\\
    & \text{(non-negativity constraints)}\nonumber\\
    &x_{ir} \geq 0 \quad \forall i\in N, r\in R\nonumber
\end{align}
% The allocation constraints are saying that remaining agents need to receive at
% least $y_t\times w_{i*}$ fraction of the total supply of their \emph{dominant} resource meta-type. 
% The eliminated agents only need $\gamma_i\times w_{i*}$ where $\gamma_i$ was determined in a previous
% round. 

\begin{fact}
    \label{fact:monotonicity of lp over rounds}
    The optimal value for Equation \ref{opt prob} is non-decreasing over rounds: $y^*_1\leq y^*_2 \leq ...$, where $y^*_t$ is the optimal objective function value of the LP in round $t$.
\end{fact}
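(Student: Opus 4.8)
The plan is to run a feasibility-transfer argument: I will show that an optimal solution of the round-$t$ LP remains feasible for the round-$(t+1)$ LP when we fix the objective value to $y_t^*$, from which $y_{t+1}^* \geq y_t^*$ follows immediately by optimality of the round-$(t+1)$ program. Chaining this inequality across consecutive rounds then yields the full chain $y_1^* \leq y_2^* \leq \cdots$.

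First I would fix an optimal solution $x^{(t)}$ of the LP in \eqref{opt prob} for round $t$, attaining value $y_t^*$, and examine exactly how the LP changes going into round $t+1$. The supply constraints $\sum_{i\in N} x_{ir} \leq S_r$ and the non-negativity constraints $x_{ir}\geq 0$ are identical in every round, so $x^{(t)}$ automatically satisfies them in round $t+1$. The only structural change is the repartition of agents into active and eliminated sets, together with the values $\gamma_i$ assigned to the agents eliminated at the end of round $t$.

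Next I would verify each block of allocation constraints of the round-$(t+1)$ LP under the candidate $y_{t+1}=y_t^*$ with allocation $x^{(t)}$, splitting the agents into three categories. (i) Agents already eliminated before round $t$, i.e. $i\in N\setminus N_t$: their $\gamma_i$ is unchanged, so their eliminated-agent constraints are literally the same as in round $t$ and hold for $x^{(t)}$. (ii) Agents still active in round $t+1$, i.e. $i\in N_{t+1}\subseteq N_t$: they were active in round $t$ too, so their active constraints evaluated at $y_t^*$ coincide with the round-$t$ active constraints, which $x^{(t)}$ satisfies. (iii) The crucial case is the newly eliminated agents $i\in N_t\setminus N_{t+1}$: by the update rule they are assigned $\gamma_i=y_t^*$, so their round-$(t+1)$ eliminated-agent constraints read $y_t^*\,w_{i*}\,d_{il}/d_{i*}\leq \sum_{r\in g_l} x_{ir}$, which are precisely the round-$t$ active constraints for those same agents and are therefore satisfied by $x^{(t)}$.

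Having checked all constraints, the pair $(x^{(t)}, y_t^*)$ is feasible for the round-$(t+1)$ LP, and since that LP maximizes $y_{t+1}$, its optimum satisfies $y_{t+1}^*\geq y_t^*$. I do not anticipate a genuine obstacle: the whole argument rests on the algorithmic choice $\gamma_i=y_t^*$ for the newly eliminated agents, which is exactly what makes case (iii) reduce to a round-$t$ constraint. The only point requiring care is the bookkeeping of the three agent categories so that no constraint of the round-$(t+1)$ LP is overlooked.
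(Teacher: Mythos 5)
Your feasibility-transfer argument is correct and is essentially the formal version of the paper's own (one-sentence) justification, which observes that eliminated-agent constraints with $\gamma_i = y_t^*$ are no more restrictive than the active constraints they replace. Your three-way split of the agents and the explicit candidate $(x^{(t)}, y_t^*)$ makes this rigorous without any gap.
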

This follows because the
constraints on eliminated agents are less restrictive than the constraints on
active agents, and the set of active agents is decreasing over time. 

\begin{definition}
    \label{eliminated resource}
    Resource $r$ is eliminated in round $t$ if $t$ is the first round in Algorithm~\ref{extendedDRF} in which $\sum_{i\in N} x_{ir} = S_r$ for
    every optimal $x$.
\end{definition}
By Fact~\ref{fact:monotonicity of lp over rounds} it is also easy to see that the set of remaining resources $R_t$ decreases over time.
\begin{definition}
    \label{eliminated agent}
    We give two equivalent definitions for eliminating agents:
    \begin{itemize}
        \item Agent $i$ is eliminated in round $t$ when there exists $g_l\in G_i$ such that $g_l\cap R_{t+1}=\emptyset $.
        \item Agent $i$ is eliminated in round $t$ when there exists $g_l\in G_i$ such that $y_t \times w_{i*}\times\frac{d_{il}}{d_{i*}} = \sum\limits_{r\in g_l} x_{ir}$ for every optimal $x, y_t$.
    \end{itemize}
    
\end{definition}
Intuitively, both definitions are saying that agent $i$ can not improve her utility further in later rounds. 
%We now present the full algorithm:
Due to space constraint we defer the proof of their equivalence, and most of the other results to the Appendix. We include the proof of Claim~\ref{at least one resource elimination} here because it is a good representation of the flavor of arguments used in other proofs. First we address the question of whether the DRF-MT can be efficiently implemented.

\begin{algorithm}
    \SetAlgoLined
    Input: Agents $N$, resources $R$, supplies $S_r$ $\forall r \in R$, demand groups $G_i \; \forall i \in N$, normalized demands $d_{il} \;\forall i\in N, g_l\in G_i$, priority weights $w_{il}\; \forall i\in N, l \in [L]$\\
    Initialize $N_0 = N$\\
    \For{$t\gets 0,1,2,...$}{
        $y_t^* \gets $ Solve \eqref{opt prob}\\ 
        Update the remaining active agents $N_{t+1}$ (using Claim \ref{efficient computation})\\
        \For{agent $i$ eliminated in this round}{
            $\gamma_i \gets y_t^* $
        }
        \If{$N_{t+1} = \emptyset$}{
            Solve Equation \ref{opt prob} and assign resources according to $x_{ir}$ with rounding\\
            break
        }
    }
    \caption{Dominant Resource Fairness with Meta-Types (DRF-MT) }
    \label{extendedDRF}
\end{algorithm}

\begin{claim}
In each round $t$ of Algorithm~\ref{extendedDRF}, at least one remaining resource $r\in R_t$ and one remaining agent $i\in N_t$ is eliminated. 
\label{at least one resource elimination}
\end{claim}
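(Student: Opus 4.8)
The plan is to prove the two eliminations separately, starting with the resource elimination, since it is the logical prerequisite: once we know some resource must be fully exhausted at optimality, the agent elimination should follow from the structure of the demand groups and the definitions given.

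First I would establish that at least one resource is eliminated in round $t$. The key idea is a \emph{scaling / slack} argument on the LP in \eqref{opt prob}. Suppose toward contradiction that no resource in $R_t$ is fully saturated across \emph{all} optimal solutions; equivalently, by Definition~\ref{eliminated resource}, for each $r\in R_t$ there is an optimal $x$ with $\sum_{i\in N} x_{ir} < S_r$. I would aggregate these witnesses (e.g.\ by averaging finitely many optimal solutions, which remains optimal by convexity of the optimal face) to obtain a single optimal solution $x^\star$ with strict slack $\sum_{i\in N} x^\star_{ir} < S_r$ simultaneously for every $r\in R_t$. Here I must be careful: resources already eliminated in earlier rounds ($r\notin R_t$) are saturated, so the slack only holds on the active resources, but that is exactly where I need it. Given this uniformly-slack optimal solution, the plan is to show one can strictly increase $y_t$: perturb by setting $y_t \mapsto y_t + \epsilon$ and scale each active agent's allocation on its active resources up by the corresponding factor. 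Because every agent's demand-group sums $\sum_{r\in g_l} x_{ir}$ can be increased by drawing on the slack in the active resources (each $g_l$ for an active agent must intersect $R_t$, otherwise that agent would already be eliminated by Definition~\ref{eliminated agent}), for small enough $\epsilon$ all active-agent constraints $y_t w_{i*} d_{il}/d_{i*} \le \sum_{r\in g_l} x_{ir}$ remain feasible while the eliminated-agent and supply constraints stay satisfied. This contradicts optimality of $y_t$, so at least one active resource must saturate.

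Next I would derive the agent elimination from the resource elimination together with Definition~\ref{eliminated agent}. Having shown $R_{t+1} \subsetneq R_t$ (some active resource drops out), I want to exhibit an agent $i\in N_t$ with some demand group $g_l^i$ satisfying $g_l^i \cap R_{t+1} = \emptyset$, i.e.\ every resource type in that group has just been eliminated. The natural approach is to look at the resource $r^\star$ that gets saturated in round $t$ and trace it back to an agent whose constraint forces this saturation: by LP duality / complementary slackness on the optimal solution, the binding supply constraint on $r^\star$ must be ``caused'' by some tight active-agent allocation constraint, and I would argue that for such an agent one of its demand groups becomes entirely contained in the newly eliminated resources. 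The second (equivalent) form of Definition~\ref{eliminated agent} is the convenient one to verify here, since it is stated directly in terms of a tight constraint $y_t w_{i*} d_{il}/d_{i*} = \sum_{r\in g_l} x_{ir}$ holding for every optimal $(x,y_t)$; establishing that ``for every optimal solution'' quantifier is where the work lies, and I would lean on Fact~\ref{fact:monotonicity of lp over rounds} and the uniqueness/structure of the optimal objective value $y_t^\star$ to handle it.

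The main obstacle I anticipate is the ``for every optimal $x$'' quantifier embedded in both Definition~\ref{eliminated resource} and Definition~\ref{eliminated agent}. The LP generically has a whole polytope of optimal allocations (indeed the paper deliberately does \emph{not} fix allocations upon elimination), so I cannot argue about a single convenient optimum; I must show saturation of $r^\star$ and tightness of an agent's constraint hold across the entire optimal face. I expect to resolve this by the averaging-of-optima trick noted above (the set of optimal solutions is convex, so a strict inequality holding at one optimum and being violated at another can be combined), and by invoking the equivalence of the two agent-elimination definitions, whose proof the paper defers but which I may assume. Verifying that the constructed $\epsilon$-perturbation respects all three constraint families simultaneously—particularly that increasing one agent's allocation does not force a supply violation given only the \emph{active} slack—is the delicate bookkeeping step, but it is routine once the slack structure on $R_t$ is set up correctly.
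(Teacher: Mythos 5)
Your first half (at least one resource is eliminated) is exactly the paper's argument: average the per-resource witnesses over the convex optimal face to get a single optimal solution with strict slack on every $r\in R_t$, note that every active agent's every demand group meets $R_t$ (else she would already be eliminated by Definition~\ref{eliminated agent}), and push $y_t$ up by $\epsilon$ --- contradiction with optimality. That part is correct and needs no repair.

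The agent-elimination half is where you diverge, and your proposed mechanism has a gap. You want to start from the newly saturated resource $r^\star$ and ``trace back'' via complementary slackness to an active agent whose allocation constraint is tight. But the dual information attached to the supply constraint of $r^\star$ points the wrong way: the dual feasibility condition for the primal variable $x_{ir^\star}$ reads $q_{r^\star} \geq \sum_{g_l \ni r^\star} q_{ig}$, i.e.\ a positive supply dual only \emph{upper-bounds} the allocation duals touching $r^\star$; it does not force any of them to be positive, and in degenerate cases $q_{r^\star}$ need not be positive even though $r^\star$ is saturated in every optimum. The duality argument that actually works (and is the paper's proof of Claim~\ref{efficient computation}) enters through the column of the objective variable $y_t$ instead: since the coefficient of $y_t$ in the objective is $1$ and its coefficients $w_{i*}d_{il}/d_{i*}$ appear only in the allocation constraints of \emph{active} agents with nonnegative sign, complementary slackness on an optimal dual solution forces some $q_{ig}>0$ with $i\in N_t$, and that constraint is then tight in every optimum --- the second form of Definition~\ref{eliminated agent}. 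Note this route does not need resource elimination as a prerequisite at all, so your framing of agent elimination as ``derived from'' resource elimination would have to be abandoned. The paper's own proof of Claim~\ref{at least one resource elimination} instead uses an elementary exchange argument: assuming a resource $r$ is eliminated but no agent is, every agent holding $r$ has an unsaturated substitute $r'$ in the same demand group (by the convex-combination trick again), so shifting $\epsilon$ of $r$ to $r'$ yields an optimal solution that does not exhaust $r$, contradicting $r$'s elimination. Either of these two arguments closes your gap; as written, your ``trace back from $r^\star$'' step would not go through.
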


\begin{proof}
    Suppose no resource is eliminated in round $t$, then for each $r\in R_t$, there exists an optimal solution
    such that $\sum_{i\in N} x_{ir} < S_r$. Then the convex
    combination of these solutions gives us an optimal solution $x^*$ that
    satisfies $\sum_{i\in N} x^*_{i, r} < S_r \;\forall r\in R_t$.
    However, by Definition \ref{eliminated agent}, for every remaining agent $i\in N_t$,
    $g_l \cap R_t \neq \emptyset\,\forall g_l\in G_i$. So if we assign
    a little more of every active resource to every active agent, then the
    overall objective value would be higher. This contradicts the optimality of the LP.
    
    Now suppose some resource $r\in R_t$ is eliminated in round $t$ but no agent is eliminated. Suppose this resource type is part of meta-type $l$. By the first definition in Definition \ref{eliminated agent}, this means that for every $i$ such that $x_{ir}>0$, there exists $ r'\in g^i_l$ such that $r'$ is not eliminated. By the same convex combination argument above, we know that there is an optimal solution such that $\sum_i x_{ir'} < S_{r'}$ for every such $r'$. Then for every such agent we can remove $\epsilon $ allocation of $r$ from her and replace it with $\epsilon$ allocation of the corresponding $r'$. This gives us an allocation that has the same objective as before without using up the entire supply of $r$, contradicting $r$ being eliminated. 
\end{proof}
This result shows that DRF-MT can be implemented
efficiently by solving at most $\min(m, n)$ number of polynomial-size linear programs. However, it does not tell us how to find the eliminated agents. The following theorem says that we can do so by looking at the dual variables of the LP. Note that the algorithm does not need to explicitly maintain a list of active resources (Equation~\ref{opt prob} does not depend on $R_t$).
\begin{claim}
This claim has two parts:
\begin{itemize}
% \item If the shadow price of a supply constraint in the LP for round $t$ is positive, then its corresponding resource type needs to be eliminated in round $t$.
% has to be eliminated.
\item If the shadow price of an allocation constraint of an active agent in round $t$ is positive, then its corresponding agent is eliminated in round $t$.
\item In each round $t$, at least one allocation constraint corresponding to an agent in $N_t$ has a positive shadow price.
%at least one supply constraint that corresponds to a resource in $R_t$ has a positive shadow price.
\end{itemize}
\label{efficient computation}
\end{claim}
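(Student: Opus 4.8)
The plan is to work entirely with the dual of the linear program~\eqref{opt prob} and to connect the shadow prices to the elimination criterion via complementary slackness. First I would write down the dual, introducing a dual variable $\lambda_{i,g_l}\ge 0$ for each active-agent allocation constraint ($i\in N_t$, $g_l\in G_i$), a variable $\mu_{i,g_l}\ge 0$ for each eliminated-agent allocation constraint, and $\sigma_r\ge 0$ for each supply constraint. The crucial structural observation is that the only primal variable appearing in the objective is $y_t$, and that $y_t$ occurs solely in the active-agent constraints, each with coefficient $w_{i*}\,d_{il}/d_{i*}$. Since the primal LP is feasible (the previous round's optimal solution remains feasible, as each $\gamma_i$ is set to the $y$-value it already attained) and its objective is bounded above by the supply constraints, strong duality holds and an optimal dual solution exists.

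For the first part, I would invoke the form of complementary slackness that holds for \emph{every} pair consisting of an optimal primal solution $x^*$ and an optimal dual solution $\lambda^*$, not merely one matched pair: at optimality the duality gap is zero, which forces $\lambda^*_{i,g_l}\bigl(\sum_{r\in g_l} x^*_{ir} - y_t^*\, w_{i*}\, d_{il}/d_{i*}\bigr)=0$ for all $(i,g_l)$ and all optimal $x^*$. Hence if the shadow price $\lambda^*_{i,g_l}$ of an active agent's allocation constraint is positive, that constraint is tight, i.e. $y_t^*\, w_{i*}\, d_{il}/d_{i*}=\sum_{r\in g_l} x^*_{ir}$, for every optimal $x^*$. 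This is exactly the second bullet of Definition~\ref{eliminated agent}, so agent $i$ is eliminated in round $t$.

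For the second part, the key is the dual constraint associated with the primal variable $y_t$. Because $y_t$ appears in the objective with coefficient $1$ and in each active-agent constraint with coefficient $w_{i*}\,d_{il}/d_{i*}$ and nowhere else, this dual constraint reads $\sum_{i\in N_t,\, g_l\in G_i}\lambda_{i,g_l}\, w_{i*}\, d_{il}/d_{i*}\ge 1$, with equality if $y_t$ is treated as a free variable. All coefficients $w_{i*}\,d_{il}/d_{i*}$ are strictly positive, so the left-hand side can reach $1$ only if at least one $\lambda_{i,g_l}$ is strictly positive. Thus in any optimal dual solution at least one active-agent allocation constraint carries a positive shadow price, which establishes the second claim; combined with the first part this also re-derives that at least one agent is eliminated each round.

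The main obstacle I anticipate is the first part: one must be careful that positive shadow price implies tightness for \emph{all} optimal primal solutions, not just for one conveniently chosen optimum, since Definition~\ref{eliminated agent} quantifies over every optimal $x$. This is precisely why I would rely on the global form of complementary slackness (valid for any optimal primal paired with any optimal dual) rather than on strict complementary slackness for a single pair. A secondary point to check is that feasibility and boundedness hold so that an optimal dual solution is available, and that the sign convention for $y_t$ (free versus nonnegative) does not weaken the dual constraint below what the argument needs.
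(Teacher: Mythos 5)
Your proposal is correct and follows essentially the same route as the paper: both parts rest on complementary slackness, with the first part using the fact that a positive dual value forces the corresponding allocation constraint to be tight in \emph{every} optimal primal solution (matching the second bullet of Definition~\ref{eliminated agent}), and the second part using the dual constraint associated with the column of $y_t$, whose entries are nonnegative and positive exactly on the active agents' allocation constraints, so the sum being at least $1$ forces some positive shadow price. Your explicit attention to the universal quantification over optimal primal solutions and to the sign convention for $y_t$ only makes the argument more careful than the paper's version.
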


% The proof is technical so we defer it to Appendix \ref{appendix}. 
Now we state our main results.

\begin{lemma}
    DRF-MT is Pareto optimal.
    \label{lemma: PO}
\end{lemma}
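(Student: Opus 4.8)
The plan is to prove Pareto optimality by contradiction, leveraging the structure of the Leontief utility together with the elimination dynamics of the algorithm. Suppose the final allocation $x$ returned by DRF-MT is not Pareto optimal. Then there exists an allocation $x'$ such that $u_i(x'_i) \geq u_i(x_i)$ for all $i \in N$, with strict inequality $u_j(x'_j) > u_j(x_j)$ for some agent $j$. I would track the utility of each agent through the rounds: by Fact~\ref{fact:tightallocation}, once an agent is eliminated in some round $t$, her allocation constraints are tight, so her utility is pinned down to exactly $\gamma_i w_{i*}/d_{i*}$, and by the monotonicity in Fact~\ref{fact:monotonicity of lp over rounds} this is the best the algorithm can guarantee her. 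The goal is to show that any attempt to strictly improve one agent must steal supply that is already fully committed, forcing some other agent strictly below her final utility.

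The key step is to argue about supply exhaustion on the \emph{dominant} meta-type of the improved agent. First I would observe that to have $u_j(x'_j) > u_j(x_j)$, by the definition of Leontief utility in \eqref{eq:leontief utility}, agent $j$ must receive strictly more of \emph{every} demanded meta-type $l$ in $x'$ — in particular strictly more of the resource types in each demand group $g_l^j$ than the amount $\sum_{r \in g_l^j} x_{jr}$ giving her current utility. Since $u_i(x'_i) \geq u_i(x_i)$ for all other agents, no agent receives less of any resource type that is binding for her utility. I would then use Claim~\ref{at least one resource elimination} (and the convex-combination argument embedded in its proof), which establishes that the supply constraint $\sum_i x_{ir} = S_r$ holds for every optimal $x$ on eliminated resources. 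The essential point is that every resource type is eventually eliminated (the algorithm terminates with $N_{t+1} = \emptyset$, and each round removes at least one resource), so in the final allocation all supplies relevant to binding utilities are fully exhausted: $\sum_i x_{ir} = S_r$.

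The main obstacle — and where I would spend the most care — is handling the flexibility in demand groups, since the algorithm deliberately does \emph{not} fix the per-resource-type allocation until the last round. An agent $j$ wanting strictly more from meta-type $l$ only needs more from \emph{some} combination of resource types in $g_l^j$, and there may be freedom to reshuffle which specific types different agents draw from. So the contradiction cannot be derived by pointing at a single overloaded resource type; instead I would argue at the level of total meta-type supply accessible to the relevant set of agents. Concretely, I would consider the set of agents competing over the exhausted resource types feeding agent $j$'s demand group, sum the supply constraints over those types, and show via a counting/flow argument that the total accessible supply is already fully allocated in $x$. Any feasible $x'$ that strictly increases $j$'s draw from this pool must therefore strictly decrease the total draw of the competing agents, and because each of those agents was receiving exactly the tight amount needed for her own Leontief utility (with a binding demand group over the same pool), at least one of them drops strictly below $u_i(x_i)$, contradicting $u_i(x'_i) \geq u_i(x_i)$.

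An alternative, and possibly cleaner, route is to invoke a lexicographic / leximin characterization: DRF-MT assigns utilities in a nested sequence of rounds where each round maximizes the guaranteed normalized utility $y_t$ of all remaining agents subject to the commitments of eliminated agents. Since the objective $y_t$ is maximized at each stage over a feasible polytope that already encodes the supply and accessibility constraints, the resulting utility profile is on the Pareto frontier: raising any agent's utility would require relaxing a binding supply constraint that is shared, which is impossible without lowering another agent's guaranteed share. I would present the contradiction argument as the primary proof and note that the leximin viewpoint gives intuition for why the elimination order is exactly the order in which agents hit their Pareto-optimal ceiling. Either way, the technical crux remains the reduction from ``strict improvement for $j$'' to ``strict loss for some competitor,'' carried out at the meta-type supply level rather than the individual resource level.
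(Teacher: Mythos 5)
There is a genuine gap, and it sits exactly where you say you would ``spend the most care.'' Your plan hinges on two things that do not hold up. First, the premise that ``every resource type is eventually eliminated'' is false: Claim~\ref{at least one resource elimination} guarantees at least one resource \emph{and} one agent are eliminated per round, but the algorithm terminates when the \emph{agents} run out, so resources can retain slack in the final allocation (e.g.\ a type accessible only to agents bottlenecked elsewhere). The fact you actually have available, via the first characterization in Definition~\ref{eliminated agent}, is weaker: each agent $j$ has \emph{some} demand group $g^j_l$ all of whose resources are exhausted in every optimal solution. Second, and more seriously, the ``counting/flow argument'' that is supposed to convert ``$j$ draws strictly more from an exhausted pool'' into ``some competitor drops strictly below her utility'' is never carried out, and it is not routine: a competitor who loses supply from $g^j_l$ may recoup it from resources in her own demand group lying outside $g^j_l$, which pushes the deficit onto yet other agents, so closing the argument requires an augmenting-path/Hall-type analysis over the bipartite accessibility structure. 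You correctly identify this as the crux, but identifying it is not the same as resolving it; as written the proof does not go through.

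The paper's proof sidesteps this entire difficulty with a short scaling argument that you should compare against. Let $t$ be the round in which the strictly-improved agent $i$ was eliminated, so $u_i(x_i)$ corresponds to share $y^*_t$ and the improved allocation $x'$ corresponds to some $y' > y^*_t$. Replace $x'_i$ by $x'_i \cdot \frac{y^*_t}{y'}$, leaving everyone else's allocation in $x'$ untouched. The result is still feasible for the round-$t$ LP with objective value $y^*_t$ (all other agents retain at least their guaranteed shares), hence optimal; but the scaling strictly reduces $i$'s consumption, so in each of $i$'s demand groups some resource now has strictly positive slack in an optimal solution. That directly contradicts $i$ being eliminated in round $t$ (no demand group of $i$ can consist entirely of eliminated resources). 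The point is that one never needs to trace where the extra supply for $i$ comes from or who pays for it; one only needs to exhibit a single optimal solution with slack in $i$'s groups. I would recommend abandoning the meta-type-level flow argument in favor of this construction.
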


\begin{lemma}
    DRF-MT is weighted envy-free.
    \label{lemma: EF}
\end{lemma}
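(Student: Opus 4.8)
The plan is to first pin down the utilities realized by the final allocation, then reduce weighted envy-freeness to a single inequality per ordered pair of agents, and finally establish that inequality by a case split on the elimination order. For the first step, note that every agent is eventually eliminated, so in the final call to \eqref{opt prob} all agents appear as eliminated agents; by Fact~\ref{fact:tightallocation} every allocation constraint is then tight in the output $x$, giving $\sum_{r\in g_l^i} x_{ir} = \gamma_i w_{i*} d_{il}/d_{i*}$ for each $i$ and each demanded meta-type $l$, and hence $u_i(x_i)=\gamma_i w_{i*}/d_{i*}$. Since $l_i^*$ minimizes $w_{il}/d_{il}$, we also record $w_{i*}/d_{i*}=\min_l w_{il}/d_{il}$. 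Because $u_i(\tilde x_i)=\min_l \frac{w_{il}}{d_{il}w_{jl}}\sum_{r\in g_l^i} x_{jr}$, to prove $u_i(\tilde x_i)\le u_i(x_i)$ it suffices to exhibit a single meta-type $\hat l$ with $\frac{w_{i\hat l}}{d_{i\hat l}w_{j\hat l}}\sum_{r\in g_{\hat l}^i} x_{jr}\le \gamma_i w_{i*}/d_{i*}$. Throughout I assume, without loss of generality, that no agent is given a resource it cannot access, i.e. $x_{jr}=0$ for $r\in\Omega_l\setminus g_l^j$, which yields $\sum_{r\in g_l^i} x_{jr}\le \sum_{r\in g_l^j} x_{jr}$.

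In the easy case $\gamma_j\le\gamma_i$, I choose $\hat l=l_i^*$, so that $w_{i\hat l}=w_{i*}$ and $d_{i\hat l}=d_{i*}$. Using the no-waste bound together with the tight-constraint identity for $j$, $\sum_{r\in g_{\hat l}^i} x_{jr}\le \sum_{r\in g_{\hat l}^j} x_{jr}=\gamma_j w_{j*}d_{j\hat l}/d_{j*}\le \gamma_j w_{j\hat l}\le \gamma_i w_{j\hat l}$, where the middle inequality is $w_{j*}/d_{j*}\le w_{j\hat l}/d_{j\hat l}$. This is exactly the bound required after substituting $\hat l=l_i^*$.

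In the hard case $\gamma_j>\gamma_i$, Fact~\ref{fact:monotonicity of lp over rounds} implies $j$ is eliminated strictly after $i$, so $j$ is still active in round $t_i$. Let $g_{l'}^i$ be a demand group all of whose resources are eliminated in round $t_i$ (it exists by Definition~\ref{eliminated agent}); thus every $r\in g_{l'}^i$ is saturated in $x$. Choosing $\hat l=l'$ and multiplying through by $d_{il'}w_{jl'}/w_{il'}$, the target inequality becomes, after substituting the tight-constraint value of $\sum_{r\in g_{l'}^i} x_{ir}$, the weighted fair-share statement $\frac{1}{w_{jl'}}\sum_{r\in g_{l'}^i} x_{jr}\le \frac{1}{w_{il'}}\sum_{r\in g_{l'}^i} x_{ir}$; that is, within $i$'s saturated demand group $j$'s weight-normalized allocation does not exceed $i$'s. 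To prove this I argue at round $t_i$: agent $i$ receives exactly its entitlement at level $\gamma_i$ inside $g_{l'}^i$, the set is fully consumed, and any allocation to $j$ inside $g_{l'}^i$ beyond its weight-proportional level could be transferred to $i$ while compensating $j$ with resources outside $g_{l'}^i$ (which $j$ can still exploit since $\gamma_j>\gamma_i$), raising the common objective $y$ and contradicting the optimality of $\gamma_i$. This is the convex-combination-plus-$\epsilon$-transfer technique of the proof of Claim~\ref{at least one resource elimination}, and it can equivalently be read off the complementary slackness conditions of \eqref{opt prob} via the shadow prices of Claim~\ref{efficient computation}.

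The main obstacle is the fair-share step in the hard case: making the $\epsilon$-transfer rigorous in the presence of overlapping demand groups. One must ensure that resource removed from $j$ inside $g_{l'}^i$ can be re-routed to $i$ while $j$ is compensated only by accessible resources lying outside $g_{l'}^i$, and that this preserves both the supply constraints and every already-eliminated agent's fixed share. Handling these overlaps carefully, rather than the algebra of the envy inequality itself, is where the real work lies; the two preceding paragraphs reduce the whole lemma to exactly this one structural fact.
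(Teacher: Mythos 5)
Your setup and your easy case ($\gamma_j\le\gamma_i$) are correct and essentially identical to the paper's argument: the paper likewise uses tightness of the eliminated agents' constraints (Fact~\ref{fact:tightallocation}) to get $\sum_{r\in g_l^j}x_{jr}=\gamma_j w_{j*}d_{jl}/d_{j*}$ and then evaluates the resulting bound at $l_i^*$. The genuine gap is in your hard case ($\gamma_j>\gamma_i$): you correctly reduce the lemma to the ``fair-share'' inequality $\frac{1}{w_{jl'}}\sum_{r\in g_{l'}^i}x_{jr}\le\frac{1}{w_{il'}}\sum_{r\in g_{l'}^i}x_{ir}$ on the saturated group $g_{l'}^i$, but you do not prove it, and the mechanism you sketch for proving it would not work. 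Transferring $\epsilon$ of $g_{l'}^i$ from $j$ to $i$ and ``compensating $j$ outside $g_{l'}^i$'' cannot ``raise the common objective $y$'': the objective of \eqref{opt prob} is a single scalar that increases only if \emph{every} active agent's share increases, so giving $i$ alone extra resources contradicts nothing about LP optimality; and the existence of slack compensation resources in $g_{l'}^j\setminus g_{l'}^i$ is precisely what would need to be established. The correct contradiction is with Definition~\ref{eliminated resource}, not with the optimality of $\gamma_i$.

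Here is how to close the gap (it is, in contrapositive form, the paper's argument). Since $t_j>t_i$, agent $j$ is still active in round $t_i$, and the final allocation $x$ paired with $y_{t_i}=y^*_{t_i}$ is an optimal solution of round $t_i$'s LP, because every agent $k$ active in that round receives $\gamma_k w_{k*}d_{kl}/d_{k*}\ge y^*_{t_i}w_{k*}d_{kl}/d_{k*}$ by Fact~\ref{fact:monotonicity of lp over rounds}. If $x_{jr}>0$ for some $r\in g_{l'}^i$, replace $x_j$ by $x_j\cdot\gamma_i/\gamma_j$ and leave every other agent untouched; agent $j$'s active-agent constraints in round $t_i$ (and in every earlier round) are still satisfied with equality, so the new point remains optimal for those LPs, yet it leaves slack in resource $r$. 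This contradicts $r$ having been eliminated by the end of round $t_i$, which your choice of $l'$ requires. Hence $x_{jr}=0$ for every $r\in g_{l'}^i$, your fair-share inequality holds with left-hand side equal to zero, and no $\epsilon$-transfer, compensation of $j$, or handling of overlapping demand groups is needed. With this replacement your argument becomes a correct, slightly reorganized version of the paper's proof.
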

% To show that $i$ does not envy $j$ for any pair $i, j \in N$, the proof looks at the following two cases: if $j$ is eliminated before or in the same round as $i$, then by Fact~\ref{fact:monotonicity of lp over rounds} $i$ must have received just as much dominant resource as $j$ (adjusted by priority weights). If $j$ was eliminated after $i$ and $i$ envies $j$, then $j$ could have shared some resources with $i$ in the round when $i$ was eliminated. This contradicts the elimination of agent $i$, by Definition~\ref{eliminated agent}.

\begin{lemma}
    DRF-MT is strategy-proof.
    \label{lemma:SP}
\end{lemma}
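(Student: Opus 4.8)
The plan is to reduce strategy-proofness to a single inequality about the fraction agent $i$ receives of one carefully chosen \emph{true} demand group, and then to establish that inequality by coupling the truthful and deviating runs of Algorithm~\ref{extendedDRF} through the progressive-filling parameter $y$.

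First I would pin down the utility an agent obtains under truthful reporting. Suppose that in the truthful run agent $i$ is eliminated in round $\tau$ with $\gamma_i=y_\tau^*$. By Fact~\ref{fact:tightallocation} all of $i$'s allocation constraints are tight, so $\sum_{r\in g_l^i}x_{ir}=\gamma_i w_{i*} d_{il}/d_{i*}$ for every true group, and hence $u_i(x_i)=\gamma_i w_{i*}/d_{i*}=\gamma_i v_i$, where I write $v_i:=w_{i*}/d_{i*}=\min_{l}w_{il}/d_{il}$ (a quantity the agent cannot influence, since the $w_{il}$ are fixed and the $\min$ is taken with the \emph{true} $d_{il}$). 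Let $(d_i',G_i')$ be any deviation, with $x'$ the resulting allocation; I must show $u_i(x_i')\le\gamma_i v_i$. Because $u_i(x_i')=\min_{l}\frac1{d_{il}}\sum_{r\in g_l^i}x'_{ir}\le\frac1{d_{il}}\sum_{r\in g_l^i}x'_{ir}$ for \emph{every} true meta-type $l$, it suffices to exhibit a single true meta-type for which $i$'s fraction of its true group is small enough. I would take the bottleneck meta-type $\bar l$ from the truthful run, i.e.\ the one whose true group $g^i_{\bar l}$ is fully saturated when $i$ is eliminated (such a group exists by the first form of Definition~\ref{eliminated agent}); the goal then becomes $\sum_{r\in g^i_{\bar l}}x'_{ir}\le\gamma_i w_{i*} d_{i\bar l}/d_{i*}$.

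Writing $S(g):=\sum_{r\in g}S_r$, the supply constraints give $\sum_{j}\sum_{r\in g^i_{\bar l}}x'_{jr}\le S(g^i_{\bar l})$, whereas in the truthful run this group is exhausted, $\sum_{j}\sum_{r\in g^i_{\bar l}}x_{jr}=S(g^i_{\bar l})$, with the others consuming $S(g^i_{\bar l})-\gamma_i w_{i*} d_{i\bar l}/d_{i*}$. Subtracting, the target reduces to the \emph{monotonicity statement} that the agents other than $i$ consume at least as much of $g^i_{\bar l}$ in the deviating run as in the truthful run, i.e.\ $\sum_{j\ne i}\sum_{r\in g^i_{\bar l}}x'_{jr}\ge\sum_{j\ne i}\sum_{r\in g^i_{\bar l}}x_{jr}$. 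Group misreports are absorbed at exactly this point: over-reporting a group only lets the LP satisfy $i$'s reported constraints with resources outside the true $g_l^i$, which contribute nothing to the true-utility bound above, while under-reporting only tightens $i$'s own constraints; and a misreported demand $d_i'$ can change the \emph{reported} dominant meta-type but leaves the cap $\gamma_i v_i$ untouched, because both $v_i$ and $g^i_{\bar l}$ are defined from the true type.

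I expect the monotonicity statement to be the main obstacle, because a single agent's deviation changes \emph{when every other agent is eliminated}, so one cannot align the two runs resource-by-resource by inspection. I would prove it by induction on the elimination order, running both processes in the common progressive-filling parameter $y$ (while active, agent $j$'s guaranteed fraction of meta-type $l$ grows as $y\,w_{j*}d_{jl}/d_{j*}$, so all active agents raise their normalized utility at a common rate) and maintaining the invariant that, at every value of $y$ up to $\gamma_i$, each resource of $g^i_{\bar l}$ is at least as consumed by $N\setminus\{i\}$ in the deviating run as in the truthful run. Fact~\ref{fact:monotonicity of lp over rounds} (the optimal value only increases across rounds) and Claim~\ref{at least one resource elimination} (each round freezes at least one resource and one agent) would drive the induction, and the observation that a deviation of $i$ can only weaken $i$'s own constraints relative to its true ones, never the constraints of its competitors, is what prevents the other agents from being crowded out. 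I may also use weighted envy-freeness (Lemma~\ref{lemma: EF}) as a consistency check on the water-levels, but I do not expect to need it for the core step.
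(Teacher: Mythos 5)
Your reduction is sound up to the last step, but the ``monotonicity statement'' on which the whole argument rests is false, and the accounting identity that produces it is also flawed. The subtraction step implicitly assumes that whatever portion of $g^i_{\bar l}$ the other agents fail to consume in the deviating run is absorbed by agent $i$; in fact the deviating run need not saturate $g^i_{\bar l}$ at all, and Fact~\ref{fact:tightallocation} pins $i$'s share of each of her \emph{reported} groups at exactly her guarantee, so the slack can simply go unallocated. Worse, the other agents can genuinely consume \emph{less} of $g^i_{\bar l}$ after a deviation by $i$. Concretely, take two singleton meta-types $A$ and $B$ (so this is already the classical DRF setting), all weights $1/2$, true demands $d_1=(0.5,\,0.1)$ and $d_2=(0.1,\,0.5)$. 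Truthfully, $y^*=5/3$, both resources saturate, agent $1$'s bottleneck group is $\{A\}$, and agent $2$ consumes $1/6$ of $A$. If agent $1$ misreports $d_1'=(0.5,\,0.5)$, then $y^{*\prime}=1$, only $B$ saturates, both agents are eliminated in round $1$, and agent $2$ now consumes only $0.1<1/6$ of $A$. Your monotonicity claim fails, yet strategy-proofness still holds because agent $1$'s tight eliminated-agent constraint caps her at $0.5$ of $A$, for true utility $1<5/3$. So the key lemma you propose to prove by induction is not a true statement, and no coupling of the two runs in the water level $y$ can rescue it: once $i$'s constraints diverge, the downstream eliminations and the other agents' final utilities can move in either direction.

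The paper avoids controlling the downstream trajectory altogether. It proves (Claims~\ref{free constraints of non-eliminated agents} and \ref{claim:same solution up to t star}) that the truthful and deviating runs are \emph{identical} up to $t^*$, the first round at which $i$ is eliminated in either run, and then argues only about round $t^*$ via a four-case analysis on the sign of $y^*_{t^*}-y^{*\prime}_{t^*}$ and on which run eliminates $i$ there: two cases are ruled out as impossible, one case feeds the deviating allocation back into the truthful round-$t^*$ LP to contradict optimality or the definition of elimination, and the remaining case bounds $i$'s share of her \emph{original} dominant meta-type directly, with a separate inequality handling the possibility that the misreport changes the reported dominant meta-type. If you want to salvage your outline, you would need to replace the global monotonicity invariant with this kind of localized argument at the first divergence point.
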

The proofs for these three lemmas all involve a case analysis of different scenarios and showing that the undesirable outcomes violate either the optimality of the LP or the definition of eliminated resources/agents, similar to the arguments presented in the proof of Claim~\ref{at least one resource elimination}.

\begin{lemma}
    Assume that demands, weights and supplies are all rational numbers. If priority weights of the algorithm are set according to the each agent's accessible contribution to the resource pool (for each meta-type), then DRF-MT satisfies sharing incentive.
    \label{lemma: sharing incentive}
\end{lemma}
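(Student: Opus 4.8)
The plan is to exploit the specific weight choice to reduce sharing incentive to a one-line lower bound on the first-round LP value. Under the hypothesis we set $w_{il}=s_{il}$, agent $i$'s self-accessible contribution to meta-type $l$; these remain valid normalized weights because, under the standing assumption that the pooled supply is exactly the agents' accessible contributions, $\sum_i s_{il}=1$ for every $l$. The point of this choice is that the original-utility benchmark collapses onto the dominant meta-type: since $l_i^*=\argmin_l w_{il}/d_{il}$,
$$u_i^o=\min_{g_l\in G_i}\frac{s_{il}}{d_{il}}=\min_{l}\frac{w_{il}}{d_{il}}=\frac{w_{i*}}{d_{i*}}.$$
Hence it suffices to show $u_i(x_i)\ge w_{i*}/d_{i*}$ for every $i\in N$.

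Next I would identify each agent's final utility. Every agent is eventually eliminated and assigned $\gamma_i=y_{t_i}^*$, where $t_i$ is her elimination round. By Fact~\ref{fact:tightallocation}, the eliminated-agent constraints of \eqref{opt prob} are tight at the final optimum, so $\sum_{r\in g_l}x_{ir}=\gamma_i w_{i*}d_{il}/d_{i*}$ for every $g_l\in G_i$, which gives
$$u_i(x_i)=\min_{g_l\in G_i}\frac{1}{d_{il}}\sum_{r\in g_l}x_{ir}=\frac{\gamma_i w_{i*}}{d_{i*}}.$$
By Fact~\ref{fact:monotonicity of lp over rounds} the optimum is non-decreasing over rounds, so $\gamma_i=y_{t_i}^*\ge y_0^*$ and therefore $u_i(x_i)\ge y_0^* w_{i*}/d_{i*}$. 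Combined with the previous paragraph, the entire lemma reduces to proving that the first-round optimum satisfies $y_0^*\ge 1$.

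To establish $y_0^*\ge1$ I would exhibit a feasible solution of the round-$0$ LP with objective value $1$: return to each agent her own accessible contributions, i.e.\ set $x_{ir}$ equal to agent $i$'s contribution of type $r$ for $r\in g_l^i$ and $0$ otherwise. The supply constraints hold because the contributions of any fixed type $r$ sum to at most $S_r$. The active-agent constraints at $y=1$ read $w_{i*}d_{il}/d_{i*}\le\sum_{r\in g_l}x_{ir}=s_{il}$, and this holds because $w_{i*}/d_{i*}=\min_l w_{il}/d_{il}\le w_{il}/d_{il}$ implies $w_{i*}d_{il}/d_{i*}\le w_{il}=s_{il}$. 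Thus $y=1$ is feasible, $y_0^*\ge1$, and chaining the inequalities yields $u_i(x_i)\ge y_0^* w_{i*}/d_{i*}\ge w_{i*}/d_{i*}=u_i^o$, as desired.

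I expect the main obstacle to be the bookkeeping around the weight choice rather than any hard inequality. One must check simultaneously that $w_{il}=s_{il}$ is a legitimate (normalized) weight vector, that the same identity forces $u_i^o$ to coincide exactly with the dominant-meta-type ratio $w_{i*}/d_{i*}$, and that the give-back allocation satisfies each constraint with the slack dictated precisely by the $\argmin$ defining $l_i^*$; it is the alignment of these three facts, all hinging on the dominant meta-type, that makes the bound go through. The rationality assumption is needed only for termination and for a well-defined final rounding step; the sharing-incentive guarantee itself is a property of the fractional LP optimum.
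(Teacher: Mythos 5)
Your proof is correct, and the overall skeleton (set $w_{il}=s_{il}$, observe $u_i^o=w_{i*}/d_{i*}$, reduce via Fact~\ref{fact:tightallocation} and Fact~\ref{fact:monotonicity of lp over rounds} to showing $y_0^*\ge 1$) matches the paper. Where you genuinely diverge is the feasibility witness for $y_0^*\ge 1$: you hand each agent back her own per-type accessible contributions, whereas the paper never touches the per-type contribution vector. Instead it derives only the aggregate covering inequality $\sum_{r\in\cup_{i\in N'}g_l^i}S_r\ge\sum_{i\in N'}w_{i*}d_{il}/d_{i*}$ for all $N'\subseteq N$, discretizes demands and supplies into $\epsilon$-units, and invokes Hall's theorem to extract a feasible allocation. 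Your construction is more elementary and buys something real: the rationality hypothesis in the lemma exists only so that the paper can pick an $\epsilon$ that evenly divides all quantities in the Hall's-theorem bipartite graph, so your argument shows the fractional guarantee holds without it (contrary to your closing remark, rationality plays no role in termination or rounding here). What the paper's heavier machinery buys in return is reusability: the identical Hall's-theorem argument proves the proportionality result in Appendix~C under Assumption~1, where no underlying per-type contribution decomposition is available and a direct give-back witness does not exist. One small bookkeeping point: you assert $\sum_i s_{il}=1$ under a ``standing assumption'' that the pool equals the accessible contributions, but the paper explicitly allows $\sum_i s_{il}<1$ (an agent may contribute resources she cannot herself access) and repairs the normalization with a zero-demand phantom agent plus scale-invariance of the weights; your argument survives this fix unchanged, but you should state it rather than assume it away.
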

In resource pooling settings, having a separate weight for each meta-type is crucial in proving this result (e.g. contributing a ton of hard drive space but no GPU should not give the agent high priority if GPU is its dominant/bottleneck resource type). The proof constructs a bipartite graph of supplies and demands of the resources, then uses Hall's Theorem Hall~\shortcite{hallstheorem1935} to show that there exists a feasible solution to the first round's LP that already gives every agent at least as much utility as they could get without participating in the pool. Since agents' utilities only increase in later rounds, the final allocation must also satisfy sharing incentive. 
% Note that this result, as well as Lemma \ref{lemma: proportionality}, still hold even without the rationality assumption. Since the proof technique of that is tangential to the main contribution of this paper, we omit it from the main body of the paper and relegate it to Section \ref{sec: removing rationality assumption}.
\subsection{Integral Allocation from Rounding}
\label{sec: rounding}
So far we have implicitly assumed that the resources are
divisible, and all fairness results are stated with respect to the \emph{fractional
assignment} output of Algorithm~\ref{extendedDRF}. In practice we round down the
output to obtain the final assignment, since resources
such as ventilators are indivisible.
Each agent loses at most 1 unit of each type of resource through rounding. Since we focus on problems where each agent receives hundreds of units of each resource, 
the performance loss due to rounding is small. For example, starting with an envy-free fractional allocation, one agent can envy another by at most $2m$ items after rounding. In Example \ref{ex:example}, $m$ is $4$, while the total allocations each agent receives are in the hundreds. So an envy of $2m$ items is not significant.
Note that such divisibility assumption is also standard in existing DRF literature, which often focus on the compute resource sharing problem:
even though CPU cores are discrete, it's common to treat the problem as a continuous problem since there is a large quantity of cores in a compute cluster. 

There are many existing algorithms that focus on fair allocation of indivisible goods (e.g. Discrete MNW from
Caragiannis \etal~\shortcite{unreasonable_fairness}). Indivisible resource allocation is particularly
important when the quantities of the resources are small (e.g.
fairly assigning a car, a house, and a ring to two people). However, as is the case with most discrete optimization problems, these
algorithms do not scale well to the sizes that we deal with in a pandemic with hundreds of hospitals and 
many types of resources.
In settings where each agent receives hundreds of units of each
resource, the performance loss due to rounding is often small compared to the
dramatic increase in computational cost for solving Mixed Integer Programs (see Section~\ref{experiments} for a numerical comparison).

\subsection{Connection to Previous Dominant Resource Fairness Algorithms}
The core difference between our problem setup and the existing DRF settings (Ghodsi \etal~\shortcite{DRF}, Parkes \etal~\shortcite{beyondDRF}) is the addition of accessibility constraints. 
%Namely, there are resource meta-types $\Omega_1, \Omega_2, \cdots, \Omega_L$, and each agent $i$ can only accept a subset $g^i_l \subseteq \Omega_l$ of each meta-type $l$. When no such allocation constraint is present, as in \cite{DRF,beyondDRF}, it is equivalent to set 
When $|\Omega_l| = 1$ for each $l \in [L]$, both our problem formulation and the DRF-MT algorithm reduce to the problem and algorithm studied in those papers. 
%because the meta-types notion collapses to the normal resource types notion used in the DRF settings, 
Note that in this simplified setting one can write out the closed form solutions to the LPs, so no actual optimization needs to be performed. However, it is natural that resources come in different ``flavors'' and that agents have different constraints/preferences over these variations. So our formulation captures a much wider range of problems encountered in practice.

% In the simple setting where every agent has non-zero demand over every resource and there is no external allocation constraint (i.e. only a single type of resource within each meta-type), our algorithm reduces to the classic DRF mechanism \cite{DRF}. 

% When there are some agents who have zero demand over some resource types (but still no allocation constraint), then our mechanism almost reduces to the generalized DRF mechanism introduced in \cite{beyondDRF}. In the first round, the both DRF-MT and their algorithm solve the same LP. However their algorithm fixes the allocation from the first round, and iteratively adds to the allocations for each agent. Our algorithm iteratively improves the guaranteed utility of each agent but does not fix the allocation until the very last round. 

\subsection{Alternative Fair Allocation Mechanisms}
As discussed in Section~\ref{sec:relatedwork} and Section~\ref{sec: rounding},
the other most suitable approaches in our setting are MNW and Discrete MNW. When
the weights are equal, MNW is also commonly referred to as the Competitive
Equilibrium with Equal Income (CEEI) approach. Without the accessibility
constraints, MNW is known to be Pareto optimal, envy-free and satisfy sharing
incentive. However, unlike DRF-MT, it is known that MNW is not strategy-proof
(see Section 6 of Ghodsi \etal~\shortcite{DRF} for an example).
% We leave as an open problem whether an extra assumption such as Assumption
% \ref{assum:proportionality} is necessary for a mechanism to achieve all four
% fairness desiderata simultaneously in our setting (Pareto optimality, weighted
% envy-freeness, strategy-proofness, sharing incentive, proportionality).
We show in Section~\ref{experiments} that our DRF-MT mechanism achieves almost
as much social welfare (i.e. sum of utilities of all agents) as MNW, and also
runs faster in practice.

\begin{figure*}[]
  \centering
  \begin{minipage}{.33\textwidth}
    \includegraphics[width=1.0\linewidth]{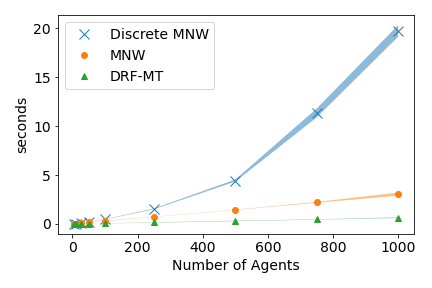}
    % \caption{Running time comparison.}
    % \label{fig:runtimea}
  \end{minipage}%
  \begin{minipage}{.33\textwidth }
    \includegraphics[width=1.0\linewidth]{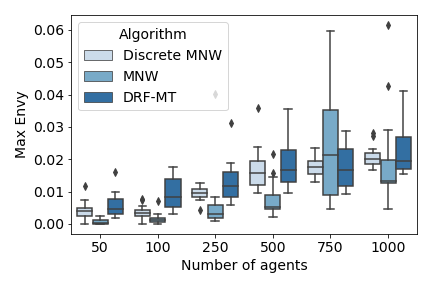}
    % \caption{Normalized max envy comparison.}
    % \label{fig: max envy}
  \end{minipage}%
  \begin{minipage}{.33\textwidth }
    \includegraphics[width=1.0\linewidth]{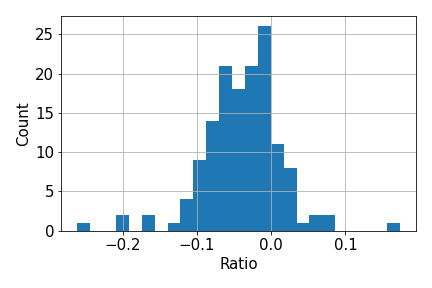}
    % \caption{Distribution of normalized difference in social welfare between
    % Discrete MNW and DRF-MT over all trials. Normalized difference is
    % calculated from subtracting the social welfare of Discrete MNW from that
    % of DRF-MT and then divide by the social welfare of Discrete MNW. }
    % \label{fig:swb}
  \end{minipage}%
  \caption{ Left: Running time comparison. Middle: Normalized max envy
    comparison. Right: Distribution of normalized difference in social welfare
    between Discrete MNW and DRF-MT over all trials. Normalized difference is
    calculated by subtracting the social welfare of Discrete MNW from that of
    (rounded) DRF-MT and then dividing by the social welfare of Discrete MNW. }
  \label{fig:main experiments}
\end{figure*}

\subsection{Extension to Arbitrary Group Structure}\label{extended-work}
We currently assume that resources and demands follow a meta-type/group/type structure. One might be interested in a general group structure where a
demand group can contain any subset of all possible resources (not necessarily
from a single meta-type). The problem with this kind of flexible group structure is
that it opens up possibilities for people to cheat the system by misreporting
their true demand structure (e.g. instead of reporting that they are indifferent
to resource A and B, and that they only need one unit of either one to finish a
unit of work, agents can claim that they need one unit each from both A and B to finish one unit of work).
In particular, Dominant Resource Fairness based approaches will likely not work, since it is unclear how one would even define the dominant resource under such a general setting. We leave this as an open question for future work.

\section{Numerical Experiments}
\label{experiments}
We compare the algorithms on running time, normalized max envy, and social welfare. Normalized max envy is the maximum envy (see Section~\ref{subsec:fairness definition}) between any pair of agents normalized as a fraction of each agent's allocated utility. Social welfare is the sum of utilities of all agents. We fix a meta-type structure ($\Omega_1 = \{0\}, \Omega_2=\{1, 2\}, \Omega_3=\{3, 4, 5\}, \Omega_4=\{6, 7, 8, 9\}$) and randomly generate the demands, group structures, and weights for the agents. For each choice of number of agents, we ran 16 trials. All three algorithms Allocations are rounded down for MNW and DRF-MT. All three algorithms allow specifying different agent weights and also observe the accessibility constraints. Gurobi~\shortcite{gurobi} and Mosek (Aps~\shortcite{aps2020mosek}) are used to implement the algorithms. More details on the experimental setup and additional experiments can be found in 
Appendix~B
%~\ref{sec: additional experiments}
.

%% Horizontal placement of figures to save space
%% COULD NOT GET IT TO FLOAT INLINE TO THE TOP instead of taking up a page
% \begin{figure*}[t!]
% \centering
% \begin{subfigure}{.5\columnwidth}
% \includegraphics[width=\columnwidth]{runtime_3methods.png}%
% \caption{cap a}%
% \label{subfiga}%
% \end{subfigure}\hfill%
% % \begin{subfigure}{.5\columnwidth}
% % \includegraphics[width=\columnwidth]{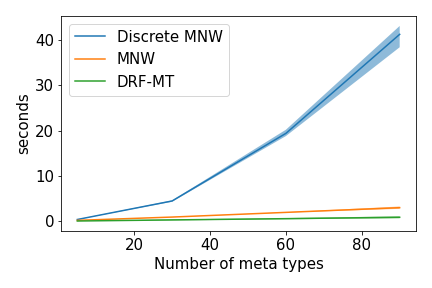}%
% % \caption{cap b}%
% % \label{subfigb}%
% % \end{subfigure}\hfill%
% \begin{subfigure}{.5\columnwidth}
% \includegraphics[width=\columnwidth]{normalized_max_envy_after_rounding_box_3algos.png}%
% \caption{cap c}%
% \label{subfigc}%
% \end{subfigure}\hfill%
% \begin{subfigure}{.5\columnwidth}
% \includegraphics[width=\columnwidth]{normalized_sw_diff_discretemnw_hist.png}%
% \caption{cap d}%
% \label{subfigd}%
% \end{subfigure}%
% \caption{The proper caption}
% \label{figabc}
% \end{figure*}
First we investigate scalability. 
As shown in Figure~\ref{fig:main experiments} (left), the running time for Discrete MNW quickly explodes while MNW and DRF-MT are much more scalable. DRF-MT runtime in particular grows very slowly.
The error region represents one standard deviation from the mean. 
%  DRF-MT has noticeably larger variance in its' run time. This is because the number of LP's we are solving in each instance may vary, while the MNW approach always solve only one optimization problem. The overhead of formulating a new optimization problem in each iteration of DRF-MT using GUROBI already significantly increases the overall compute time. Despite the extra overhead, our approach is still significantly faster.

Recall that DRF-MT is envy-free before rounding. We now investigate envy when the solution is rounded.
Without accessibility constraints, MNW is also envy-free before rounding, and Discrete MNW satisfies envy-free up to one good. Figure~\ref{fig:main experiments} (middle) shows that all three algorithms have small max envy after rounding in practice ($< 4\% $ in most trials).
% It is expected that max envy increases with respect to the number of agents, since we take the maximum over all agent pairs. 
% This shows that for problem instances where the number of units assigned to each agent is large, divisibility is a reasonable approximation to make in return for a much more scalable algorithm.

Finally, we compare the social welfare obtained under DRT-MT and Discrete MNW. %We do not include MNW, since its social welfare 
Figure~\ref{fig:main experiments} (right) shows that in roughly 95\% of the trials, DRF-MT obtained at least 90\% as much social welfare as Discrete MNW. 
% In Appendix~\ref{sec: additional experiments}
In Appendix~B
we show that MNW has slightly lower social welfare than Discrete MNW, so the above conclusion holds when DRF-MT is compared to rounded MNW as well.
%which is also an upper bound on the social welfare of rounded MNW allocation. 
%This shows that DRF-MT sacrifices only a small amount of social welfare in exchange for an additional strategy-proof property. 

In conclusion, compared to both Discrete MNW and MNW,  DRF-MT 1) achieves almost as much social welfare, 2) has comparable level of max envy, 3) has the additional property of strategy-proofness (in the fractional case), and 4) is more scalable. An interesting avenue for future work is to determine the properties of the rounded variant of DRF-MT. A particularly interesting question would be whether one can show that approximate strategy-proofness holds when there is large supply of each item.
%Therefore we believe DRF-MT is a better algorithm for the setting described in this paper than the alternatives.

%% The file named.bst is a bibliography style file for BibTeX 0.99c
\clearpage
\bibliographystyle{named}
\bibliography{ijcai21}

% Appendix
\clearpage

\appendix

\section{Missing Proofs of Results}
\label{appendix}
\subsection{Proof of Equivalence in Definition~\ref{eliminated agent}}
\begin{proof}
    Suppose the second definition holds but the first one does not. Then by the definition of eliminated resource, there exists an optimal solution such that for every $g_l\in G_i$, there exists $r\in g_l$ such that $\sum_{i\in N} x_{ir} < S_r$. Then for every $g_l\in G_i$ we can assign $i$ a little more of the resource type above, and have $y_t\times w_{i*} \times \frac{d_{il}}{d_{i*}} < \sum_{r\in g_l} x_{ir}$. This contradicts the second definition.
    
    Now suppose the first definition holds but the second definition does not. This means that there exists an optimal solution such that $y_t\times w_{i*} \times \frac{d_{il}}{d_{i*}} < \sum_{r\in g_l} x_{ir}$ for every $g_l\in G_i$. Consider the $g_l$ such that $g_l\cap R_{t+1}=\emptyset$ by the first definition (every $r\in g_l$ is eliminated by the end of round $t$). We can reduce the allocation of resources in that demand group to $i$ by a little bit without sacrificing optimality because the allocation constraints were satisfied strictly. But this means we have an optimal solution that does not use up the supply of $r\in g_l$: this contradicts the elimination of these resources in the first definition. 
\end{proof}

\subsection{Proof of Claim~\ref{efficient computation}}
\begin{proof}
    The first part is straightforward. 
    % Suppose $q_r$ is the dual variable for
    % the supply constraint of $r$. If $q_r >0$, then by complementary
    % slackness, $\sum\limits_{i\in N} x_{ir} = S_r$ for all optimal $x$. 
    % Similarly, 
    If $q_{ig}>0$ is the dual variable for the allocation constraint for some agent $i\in N_t, g\in G_i$, then by complementary slackness every optimal solution needs to satisfy $y_t \times w_{i*}\times\frac{d_{il}}{d_{i*}} = \sum\limits_{r\in g_l} x_{ir}$, which means agent $i$ needs to be eliminated by Definition~\ref{eliminated agent}.

    Let's now rewrite the linear program solved at time $t$:
    \begin{align}
        &\max  y_t &&&&\label{opt prob reformulated}\\
        \text{s.t.}  %&\text{(allocation constraints)}\nonumber\\ 
                    -&\sum\limits_{r\in g_l} x_{ir} + &\frac{d_{il}}{d_{i*}} w_i y_t \leq& 0  &\forall& i\in N_t, g_l\in G_i \label{active allocation constraints}\\
                    -&\sum\limits_{r\in g_l} x_{ir} &\leq& - \frac{d_{il}}{d_{i*}} \gamma_i w_i  &\forall& i \not\in N_t,
                        g_l\in G_i \label{eliminated allocation constraints}\\
                        % &\text{(supply constraints)}\nonumber\\
                     &\sum\limits_{i\in N} x_{ir} 
                        &\leq& S_r &\forall& r\in R \label{supply constraints}\\
                     &x_{ir} &\geq& 0 &\forall& i\in N, r\in R\nonumber
    \end{align}

    % From the proof of Claim~\ref{free constraints of non-eliminated agents} we know that $x_{ir}=0$ for $i\in N_t, r\notin R_t$. So taking the intersection of $g_l$ and $R_t$ in \eqref{active allocation constraints} does not change anything. 
    This LP is in canonical form, where the objective coefficient vector is $c^T
    = [0,...,0,1]$. Let $q_{ig}$ be the dual variables that correspond to the
    allocation constraints (constraint \eqref{active allocation constraints} and \eqref{eliminated allocation constraints}), and $q_r$ the dual variables corresponding to the
    supply constraints (constraint \eqref{supply constraints}). Let $y^*_t$ be the value of $y_t$ in an optimal solution
    to the linear program and let $\bar q$ be the optimal solution to the
    corresponding dual program.
    By complementary slackness we know that $\bar q ^\top A_{y} =
    c_{y} = 1$, where $A_y$ is the last column of the primal constraint
    matrix. Note that the entries in $A_y$ are either positive or zero.
    Therefore, $\bar q_{ig}$ must be positive for some $i\in N_t, g\in G_i$. This finishes the proof of the second part.
    
    % Now, for the same $g$ as above, $y^*_t>0$ also means that there exists
    % $r\in g \cap R_t$ such that $x_{ir}>0$. By complementary slackness,
    % this means that the dual constraint that corresponds to $x_{ir}$ has to
    % be tight: $\bar q^\top A_{ir} = c_{ir} = 0$ where $A_{ir}$ is the column
    % of the constraint matrix corresponding to $x_{ir}$ in the primal.
    % Note that entries in $A_{ir}$ are either negative or zero except the
    % single entry that corresponds to the supply constraint of $r$.
    % Furthermore, $A_{ir, ig} = -1 < 0$, where $A_{ir, ig}$ is the entry in
    % $A_{ir}$ that corresponds to the row for agent $i$ and demand group $g$
    % (because $r\in g$).
    % Therefore $\bar q_r$ must be positive in order for the equality $\bar
    % q^\top A_{ir}=0$ to hold (since $\bar q_{ig} \cdot A_{ir, ig} <0$).
    % This shows that for at least one resource $r\in R_t$, the shadow price $\bar q_r$ is positive.
\end{proof}

\subsection{Proof of Lemma~\ref{lemma: PO} and Fact~\ref{fact:tightallocation}}

\begin{proof}
    Suppose $x$ is the output of Algorithm \ref{extendedDRF} and there exists
    allocation $x'$ such that agent $i$ is strictly better off while other
    agents have just as much utility. Let $y'\times w_{i*}$ be the fraction of the dominant resource meta-type that $i$ receives with allocation $x'$.

    Let $t$ be the round in which $i$ was eliminated in Algorithm \ref{extendedDRF}. 
    Since $i$ is strictly better off with allocation $x'$, 
    $y'>y^*_t$. Now we construct a new allocation by scaling down
    agent $i$'s allocation from $x'_i$ to $x'_i \frac{y^*_t}{y'}$. 
    Since we know other agents have at least as much utility as with allocation $x$, this new solution has an LP objective value at least as high as before, satisfies all the allocation/supply constraints, and does \emph{not} use up all the resources that $i$ cares about. This contradicts $i$ being
    eliminated in round $t$. This concludes the proof for Lemma~\ref{lemma: PO}

    Note that by the same argument as above we know that the allocation constraint in Equation \ref{opt prob} for the eliminated agents has to be satisfied with equality (otherwise we can scale this allocation down to make the constraint tight, and that agent would not have been eliminated in an earlier round).
\end{proof}

\subsection{Proof of Lemma~\ref{lemma: EF}}
\begin{proof}
    For any pair of agents $i, j \in N$, we will show that $i$ does not envy $j$. Let $x$ be the allocation returned by Algorithm \ref{extendedDRF}. Starting from the LHS of the definition of weighted envy-freeness:
    \begin{align*}
        & u_i\left(x_{jr}\frac{w_{il}}{w_{jl}}\,\, \forall r\in g_l^i, l\in[L]\right)\\
        &= \min\limits_{g_l \in G_i} \frac{\sum\limits_{r\in g^i_l}x_{jr}\frac{w_{il}}{w_{jl}}}{d_{il}}\\ 
        &= \min\limits_{g_l \in G_i} \frac{\sum\limits_{r\in g^j_l \cap g^i_l}x_{jr}\frac{w_{il}}{w_{jl}}}{d_{il}}\\ 
        &\leq \min\limits_{g_l \in G_i} \frac{\sum\limits_{r\in g^j_l}x_{jr}\frac{w_{il}}{w_{jl}}}{d_{il}}\\
        & = \min\limits_{g_l \in G_i} \frac{\frac{w_{il}}{w_{jl}}\sum\limits_{r\in g^j_l}x_{jr}}{d_{il}}.
    \end{align*}
    
    The first equality is the definition of Leontief utility in \eqref{eq:leontief utility}. The second equality holds because $x_{jr} = 0$ for $r \in \Omega_l$ but $r\notin g_l^j$ (If the output allocation does contain inaccessible resources then we can simply remove them without affecting the utilities of agents). The inequality follows from non-negativity of $x_{jr}$.
    
    Now let $t_i$, $t_j$ be the rounds in which
    agent $i$ and $j$ are eliminated respectively.
    Note that from the LP in Equation \ref{opt prob}, we know $\sum\limits_{r\in g^j_l}x_{jr} = y^*_{t_j} w_{j*} d_{jl}/ d_{j*}$. So
    \begin{align*} 
        \min\limits_{g_l \in G_i} \frac{\frac{w_{il}}{w_{jl}}\sum\limits_{r\in g^j_l}x_{jr}}{d_{il}} &= \min\limits_{g_l \in G_i} \frac{\frac{w_{il}}{w_{jl}} y^*_{t_j} w_{j*} d_{jl}/ d_{j*}}{d_{il}}\\ 
        &= \min\limits_{g_l \in G_i} \frac{w_{j*}}{d_{j*}} \frac{d_{jl}}{w_{jl}}y^*_{t_j}\frac{w_{il}}{d_{il}}\\
        &\leq \min\limits_{g_l \in G_i} y^*_{t_j}\frac{w_{il}}{d_{il}} = y^*_{t_j}\frac{w_{i*}}{d_{i*}}
    \end{align*}
    
    where the inequality follows from the definition of dominant resource meta-type ($\frac{w_{j*}}{d_*^j} = \min\limits_{l} \frac{w_{jl}}{d_{jl}}$). 
    % \begin{align*}
    % \min\limits_{l} \frac{w_{j*}}{d_*^j} \frac{d_l^j}{w_{jl}}y^*_{t_j}\frac{w_{il}}{d_{il}} &\leq \min\limits_{l} y^*_{t_j}\frac{w_{il}}{d_{il}} = y^*_{t_j}\frac{w_{i*}}{d_{i*}}.
    % \end{align*}
    
    If $y^*_{t_j} \leq y^*_{t_i}$ (which means $t_j \leq t_i$, by Fact \ref{fact:monotonicity of lp over rounds} and Fact \ref{fact:tightallocation}), we have
    \begin{align*}
        y^*_{t_j}\frac{w_{i*}}{d_{i*}} \leq y^*_{t_i}\frac{w_{i*}}{d_{i*}} 
        =u_i(x_i).
    \end{align*}
    
    Now suppose $y^*_{t_j} > y^*_{t_i}$ (which means $t_j>t_i$), and that $i$ envies $j$. 
    Note that this implies that for every group $g_l^i \in G_i$, there exists at least one $r \in g_l^i$ such that $x_{jr}>0$. 
    
    Consider an alternative allocation $x'$
    that scales the allocation to agent $j$ to 
    $\frac{y^*_{t_i}}{y^*_{t_j}} x_j$ while keeping the allocations to other agents the same as in $x$, namely, $x'_j = x_j\frac{y^*_{t_i}}{y^*_{t_j}}$ and $x'_k = x_k \; \forall k\neq j$. This alternative allocation gives every agent as much utility as they had before in round $t_i$ while maintaining 
    % achieves the optimal value in the optimization problem in round $t_i$ 
    % while maintaining 
    slack in at
    least one resource from each demand group of $G_i$. This contradicts the definition of $t_i$ because agent $i$ was eliminated in round $t_i$ (see
    Definition \ref{eliminated agent}).
\end{proof}

\subsection{Proof of Lemma~\ref{lemma:SP}}
Our proof approach is adapted from \cite{beyondDRF} with important modifications. We first introduce some new notations and prove
two helpful results. 
% For an arbitrary agent $i$, 
Let $i$ be the only agent who reports her demands untruthfully. Let $d$ be the true demand vector for all agents
and $d'$ be an alternative demand 
% $d'$ 
where only the elements belonging to agent $i$
might be different. Let $t^*$ be the first round in which agent
$i$ is eliminated in Algorithm \ref{extendedDRF}, either with truthful or untruthful reporting (minimum of
the two). Let $N_t, N_t'$, and $y^*_t, y^{*\prime}_t$  represent the remaining active agents at the beginning of round $t$, and the optimal LP objective in round $t$, under $d$ and $d'$ respectively, 
\begin{claim}
    If agent $i$ is not eliminated in round $t$, then if we remove the
    allocation constraint for agent $i$ and omit the variables related to agent
    $i$ from the supply constraints in Equation \ref{opt prob}, the optimal
    value as well as agents eliminated in that round do not change.
    \label{free constraints of non-eliminated agents}
\end{claim}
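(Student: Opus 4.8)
The plan is to read the modification as deleting agent~$i$ outright: the reduced program is Equation~\eqref{opt prob} with the allocation constraints of $i$ removed and the variables $x_{ir}$ erased from the supply constraints. Write $y_t^*$ for the optimum of the original LP and $\hat y$ for that of the reduced LP. One inequality is free: deleting $i$ only removes constraints and frees supply, so the reduced LP is a relaxation and $\hat y \ge y_t^*$. The real content is to prove $\hat y \le y_t^*$ and that the set of eliminated agents among $N_t\setminus\{i\}$ is unaffected.

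Both halves rest on one construction that uses the hypothesis that $i$ is \emph{not} eliminated. By the second form of Definition~\ref{eliminated agent}, for every group $g_l\in G_i$ some optimum of the original LP leaves $i$'s allocation constraint for $(i,g_l)$ slack; averaging these finitely many optima gives a single optimal $x^*$ with $\sum_{r\in g_l}x^*_{ir} > y_t^*\,w_{i*}d_{il}/d_{i*}$ strictly for all $g_l\in G_i$. This $x^*$ supplies the ``room'' for $i$: from $\sum_j x^*_{jr}\le S_r$ we get the residual bound $S_r-\sum_{j\ne i}x^*_{jr}\ge x^*_{ir}$ for each $r$.

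For $\hat y\le y_t^*$, suppose the reduced LP reached some $y'>y_t^*$ at a point $\hat x$. For small $\lambda>0$ I would blend $z_{jr}=\lambda\hat x_{jr}+(1-\lambda)x^*_{jr}$ for $j\ne i$ and set $x_{ir}=(1-\lambda)x^*_{ir}$. This is feasible for the original LP at value $\tilde y=\lambda y'+(1-\lambda)y_t^*>y_t^*$: supply holds since $\sum_{j\ne i}z_{jr}+x_{ir}\le\lambda S_r+(1-\lambda)S_r=S_r$; the constraints of the agents $j\ne i$ hold at $\tilde y$ by convexity; and $i$'s constraints hold at $\tilde y$ because the strict slack of $x^*$ yields $(1-\lambda)\sum_{r\in g_l}x^*_{ir}\ge\tilde y\,w_{i*}d_{il}/d_{i*}$ once $\lambda$ is small. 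This contradicts optimality of $y_t^*$, so $\hat y=y_t^*$.

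For the eliminated agents, deleting $i$ from any original-LP optimum gives a reduced-LP optimum with identical $x_j$ for all $j\ne i$; hence if $j\ne i$ is not eliminated in the original LP it is not eliminated in the reduced one. The reverse direction is the crux, and the same blend settles it: if $j\ne i$ has a slack group in some reduced-LP optimum $\hat x$, form $z_{jr}=\theta\hat x_{jr}+(1-\theta)x^*_{jr}$ for $j\ne i$, which is still a reduced optimum keeping $j$'s slack, and lift it by appending $x_{ir}=(1-\theta)x^*_{ir}$. The residual bound $S_r-\sum_{j\ne i}z_{jr}\ge(1-\theta)x^*_{ir}$ shows $i$ fits, and the strict slack of $x^*$ shows $i$'s demand at level $y_t^*$ is met for small $\theta$, producing an original-LP optimum in which $j$ still has slack. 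Thus elimination of every $j\ne i$ agrees across the two programs. I expect the lifting step to be the main obstacle: an arbitrary reduced-LP optimum may waste resources and leave no room to reinsert $i$, so one cannot lift every optimum; blending with $x^*$ and shrinking $i$'s allocation by the factor $(1-\theta)$ is precisely what simultaneously preserves $j$'s slack and keeps $i$'s demand inside the residual supply.
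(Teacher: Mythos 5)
Your proof is correct, but it reaches the conclusion by a genuinely different route than the paper. The paper's proof is structural: it first argues that any optimal solution can be modified so that $x_{ir}=0$ on every resource eliminated in round $t$ (since $i$ is not eliminated, each of her demand groups retains a non-eliminated resource to shift mass onto), concludes that $i$'s allocation constraints are supported entirely on resources with slack supply, and then declares those constraints ``free'' and removable. That last step --- that the non-eliminated resources always leave enough room to serve $i$ at level $y_t$ without restricting anyone else --- is asserted rather than constructed. Your argument never mentions eliminated resources at all: you extract from the second form of Definition~\ref{eliminated agent} a single optimum $x^*$ of \eqref{opt prob} in which \emph{all} of $i$'s group constraints are strictly slack, and then reinsert $i$ into any reduced-LP solution by taking a convex combination with $x^*$ on the agents $j\neq i$ while scaling $i$'s allocation to $(1-\lambda)x^*_{ir}$; strict slack absorbs both the scaling and the (possibly larger) objective $\tilde y$. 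This buys two things: it makes the ``room for $i$'' step fully explicit, and it cleanly handles the second half of the claim (that the set of eliminated agents among $N_t\setminus\{i\}$ is unchanged), which the paper covers only implicitly by saying the constraint can be ``removed completely.'' The cost is that your construction is longer and repeats the blending twice, whereas the paper's observation that $i$ only ever consumes non-scarce resources is reused later (e.g.\ it notes that $x_{ir}=0$ persists in all future rounds), which your argument does not directly yield.
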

\begin{proof}
    First we show that $x_{ir} = 0$ if $r$ is one of the eliminated
    resources in that round. Suppose $x_{ir}>0$. Since $i$ is not
    eliminated, there must exist another resource $r'$ in the same demand group of
    $r$ for agent $i$ that is not eliminated. This means that we could
    replace some of the allocation of $r$ with a little more
    allocation of $r'$. But this would then contradict $r$ being an
    eliminated resource. Note that by the same logic $x_{ir}=0$ holds in
    all future rounds too.
    
    This allows us to remove $x_{ir}$ from the supply constraints. Now the
    allocation constraint can be written as
    $$ 
    y_t \times w_{i*}\times\frac{d_{il}}{d_{i*}} \leq \sum\limits_{r\in  g_l\cap R_{t+1}} x_{ir} \quad \forall g_l\in G_i\nonumber\\
    $$

    Since the remaining resources are not constrained by supply, this
    inequality can always hold without posing limits on other variables. So we
    can remove this constraint completely.
\end{proof}

\begin{claim}
    For all $t\leq t^*$, $N_t = N_t'$. 
    For all $t< t^*$, $y^*_t = y_t^{*\prime}$. 
    \label{claim:same solution up to t star}
\end{claim}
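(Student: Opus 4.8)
The plan is to prove Claim~\ref{claim:same solution up to t star} by induction on $t$, showing that the two executions of Algorithm~\ref{extendedDRF} under $d$ and $d'$ stay in lockstep until agent $i$'s first elimination round $t^*$. The intuition is that up until the round where the misreport actually ``kicks in'' (i.e., agent $i$ gets eliminated under one of the two reports), agent $i$'s presence does not distinguish the two runs: in every round $t<t^*$ agent $i$ is still active under both $d$ and $d'$, so its constraint is a genuine active-agent constraint in both LPs, but crucially Claim~\ref{free constraints of non-eliminated agents} lets me strip agent $i$'s constraint and variables out of the LP entirely without changing either the optimal value or the set of eliminated agents. Once agent $i$ is removed, the two LPs become \emph{identical}, because the remaining data (demands, weights, groups of all agents $j\neq i$, supplies) are unaffected by whether $i$ reports $d_i$ or $d_i'$.

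Concretely, I would set up the induction with the hypothesis that $N_t=N_t'$ for all rounds up to the current one, and $y^*_{s}=y_s^{*\prime}$ for all $s$ strictly before the current round. For the inductive step in round $t<t^*$: by definition of $t^*$, agent $i$ is not eliminated in round $t$ under either report, so it belongs to both $N_t$ and $N_t'$. By the inductive hypothesis the set of already-eliminated agents and their fixed values $\gamma_j$ coincide across the two runs (these $\gamma_j$ depend only on earlier optimal values $y^*_s$, which agree). Now apply Claim~\ref{free constraints of non-eliminated agents} to both LPs to delete agent $i$'s allocation constraint and its variables $x_{ir}$. The two reduced LPs now have \emph{exactly} the same objective, the same active-agent constraints (for $j\in N_t\setminus\{i\}=N_t'\setminus\{i\}$), the same eliminated-agent constraints (same agents, same $\gamma_j w_{j*} d_{jl}/d_{j*}$), and the same supply constraints. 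Hence they are the same LP, so $y^*_t=y_t^{*\prime}$, and moreover they have the same set of optimal solutions, which by Definition~\ref{eliminated agent} forces the same set of agents (among $j\neq i$) to be eliminated in round $t$. Since $i$ itself is eliminated in neither, we conclude $N_{t+1}=N_{t+1}'$, completing the induction for the $N_t$ claim up to $t=t^*$ and the $y^*_t$ claim up to $t<t^*$.

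The one subtlety I would be careful about is the boundary at $t^*$ itself: the claim asserts $N_{t^*}=N_{t^*}'$ but only $y^*_t=y^{*\prime}_t$ for $t<t^*$. This asymmetry is exactly what the argument produces, since $N_{t^*}=N_{t^*}'$ follows from the step at round $t^*-1$ (where both runs still have $i$ active and the reduced LPs coincide), whereas in round $t^*$ agent $i$ is eliminated under at least one report, so agent $i$'s constraint can no longer be stripped out symmetrically and the objectives $y^*_{t^*}$ and $y^{*\prime}_{t^*}$ need not match. I would state the base case ($t=0$, where $N_0=N_0'=N$ trivially and there are no eliminated agents yet) and then run the inductive step only for $t<t^*$, noting that it simultaneously delivers $N_{t+1}=N_{t+1}'$ for all $t+1\le t^*$.

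The main obstacle, and the place requiring the most care, is verifying that the two reduced LPs are genuinely identical rather than merely having equal optima. For the elimination conclusion I need the stronger statement that the \emph{optimal face} (the set of optimal solutions) is the same, since Definition~\ref{eliminated agent} quantifies over \emph{every} optimal $x$. This follows once the LPs are literally the same program, so the real work is checking that every piece of LP data coincides: this in turn relies on the inductive hypothesis guaranteeing that the fixed values $\gamma_j$ for previously eliminated agents agree across the two runs, which is where the $y^*_s=y^{*\prime}_s$ part of the hypothesis for $s<t$ is essential. I would make sure to spell out that $\gamma_j=y^*_{s(j)}=y^{*\prime}_{s(j)}$ for each eliminated $j$, closing the loop between the two halves of the claim.
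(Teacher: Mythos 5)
Your proof is correct and follows essentially the same route as the paper's: induction on $t$, using Claim~\ref{free constraints of non-eliminated agents} to strip agent $i$'s constraints and variables from both LPs so that the two reduced programs become literally identical, which forces equal optima and the same eliminated agents. Your treatment is in fact more careful than the paper's terse version, particularly in tracking that the $\gamma_j$ values agree across the two runs and in handling the boundary round $t^*$.
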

\begin{proof}
    We use proof by induction. $t=0$ holds trivially.

    We assume the claim holds for $t$. Suppose $t+1 < t^*$. Then by Claim \ref{free constraints of non-eliminated agents}, we can remove the constraints
    related to agent $i$ from the optimization problem. But the only
    differences between these two optimization problems are those related to
    agent $i$, so they have the same solutions and we are eliminating the
    same agents.

\end{proof}

Now we prove Lemma \ref{lemma:SP}.
\begin{proof}

    Let $x$ and $x'$ be the allocations returned by Algorithm \ref{extendedDRF} given demand $d$ (truthful reporting) and $d'$ (agent $i$ misreports) respectively. We consider the following four cases separately.
    
    \begin{itemize}
        \item $y^*_{t^*}\leq y_{t^*}^{*\prime}$ and agent $i$ is eliminated in $t^*$
        reporting $d$. By Claim \ref{claim:same solution up to t star}, we know
        $N_{t^*}=N_{t^*}'$. Although we do not know the exact round in which
        agents in $N_{t^*}'$ are eventually eliminated under $d'$, we know that their
        dominant resource shares are all at least $y^{*\prime}_{t^*}\geq y^*_{t^*}$,
        because the optimal objective value of the optimization problem can
        only increase over time by Fact \ref{fact:monotonicity of lp over rounds}.

        Suppose $u_i(x_i') > u_i(x_i)$. Now
        consider $x'$ as a candidate solution for the optimization problem
        in round $t^*$ of truthful reporting. 
        %\Shatian{(this is saying $x'$ is a feasible solution to the LP in round $t^*$ under truthful reporting right?)yeah}
        Every agent $j$ in $N_{t^*}$ receives at least $y_{t^*}^{*\prime} w_{j*} \geq y^*_{t^*}w_{j*}$ fraction of their dominant resource meta-type, while agent $i$
        receives strictly more. This contradicts either the optimality of $y_t^*$ or the fact that agent $i$ was eliminated in round $t^*$ reporting $d$ (see Definition \ref{eliminated agent}).

        \item $y_{t^*}^*\geq y_{t^*}^{*\prime}$ and the agent is eliminated in $t^*$
        reporting $d'$. 
        Suppose the dominant resource meta-type is the same under $d'$. Since $y_{t^*}^{*\prime}\times w_{i*}$ is the fraction of the total supply of dominant resource that $i$ receives, agent $i$ must be receiving
        less of that under $d'$.
        
        Now suppose the reported dominant resource
        meta-type is different under $d'$. Let $\bullet$ denote the new dominant resource meta-type. Let $w_{i \bullet}$, $d_{i\bullet}^{\prime}$ be the new dominant resource weight and demand. Let $d_{i*}^{\prime}$ be the \emph{new} demand for the \emph{original} dominant resource meta-type. The amount of original dominant resource meta-type $i$ receives is
        \begin{align*}
            y_{t^*}^{*\prime} \frac{w_{i\bullet}}{d_{i\bullet}^{\prime}} d_{i*}^{\prime}&\leq y_{t^*}^{*\prime} w_{i*} \leq y_{t^*}^{*} w_{i*}
        \end{align*}
        The first inequality follows from the definition dominant resource meta-type: $\frac{w^i_\bullet}{d^{i\prime}_\bullet} := \min\limits_{l\in[L]} \frac{w_{il}}{d^{i\prime}_l}$.
        The final expression is the amount of \emph{original} dominant resource that agent $i$ receives under truthful reporting.

        \item $y_{t^*}^{*\prime} > y_{t^*}^*$ and the agent is not eliminated reporting
        $d$ but is eliminated reporting $d'$. We argue that this case cannot
        happen. By Claim \ref{claim:same solution up to t star}, we can remove the
        allocation constraints related to $i$ in round $t^*$ under truthful reporting. But
        then we are left with two optimization problems with the same
        constraints, except that with untruthful reporting the optimization
        problem has extra allocation constraint (for agent $i$), and an extra
        non-negative term in the supply constraints. Extra constraints
        and extra terms in the supply constraints can only make the
        optimization problem harder.

        \item $y_{t^*}^{*\prime} < y_{t^*}^*$ and the agent is eliminated reporting $d$
        but not eliminated reporting $d'$. This is the symmetric case as the
        previous one and so cannot happen either.

        Finally, a closer inspection of the above shows we did not need the group structure of agent $i$ to stay the same, so the result holds for misreporting group structures as well. 
        %\Shatian{we might need to say at the beginning of this proof that we present the proof for misreporting the demand vector but not the demand group structure. But the proof actually works also for misreporting demand group structure.}
    \end{itemize}    

\end{proof}

\subsection{Proof of Lemma~\ref{lemma: sharing incentive}}
\begin{proof}
    
    Recall that we use $s_{il}$ denote the proportion that is both accessible to and contributed by agent $i$. Each agent might have access to other people's contributions as well. We set $w_{il} = s_{il}$.
    Since an agent might not have access to all of the supplies that she brings, $\sum\limits_{i\in N}w_{il}$ might be strictly less than one. In that case we can pretend that there is a phantom agent with weight $1-\sum\limits_{i\in N}w_{il}$ for each meta-type $l$, and that his demand vector is zero. Note that we do not need to implement this phantom agent when running the algorithm, because DRF-MT is invariant to the scale of weights. We are only adding this weight to make our definition of sharing incentive consistent with the assumption $\sum\limits_i w_{il} = 1$
    
    %By the definition of dominant resource we have $u_i^o \coloneqq \min\limits_{l:d_{il}>0}\frac{s_{il}}{d_{il}} = \frac{w_{i*}}{d_{i*}}$.  The following inequality is satisfied:
    
    % $$\min\limits_{N'\subseteq N, l\in\{1\ldots L\} }
    %   \left(\sum\limits_{r\in \cup_{i\in N'} g^i_l} S_r\right) /
    %   \left(\sum\limits_{i\in N'} w_{i*}\frac{d_{il}}{d_{i*}}\right) \geq 1.$$
    
    Now note that $\sum\limits_{r\in \cup_{i\in N'} g^i_l} S_r \geq \sum\limits_{i\in N'} s_{il}$ because each agent has access to her own accessible supply. By the definition of dominant resource $\frac{w_{i*}}{d_{i*}}d_{il}\leq w_{il}$. So for any $N'\subset N, l\in[L]$
    $$\left(\sum\limits_{r\in \cup_{i\in N'} g^i_l} S_r\right) /
      \left(\sum\limits_{i\in N'} w_{i*}\frac{d_{il}}{d_{i*}}\right)\geq \frac{\sum\limits_{i\in N'} s_{il}}{\sum\limits_{i\in N'} w_{il}} = 1.$$
    
    After rearranging the terms, we have 
    \begin{equation}
        \sum\limits_{r\in
    \cup_{i\in N'} g^i_l} S_r\geq
    \left(\sum\limits_{i\in N'} w_{i*}\frac{d_{il}}{d_{i*}} \right)\; \forall N'\subseteq N, l\in\{1\ldots L\}. 
    \label{eq: sharing incentive}
    \end{equation}

    For every agent $i$ and every meta-type $l$, consider $w_{i*}\frac{d_{il}}{d_{i*}}$ as the ``total demand'' for resource meta-type $l$ from agent $i$.

    Then, we construct a bipartite graph as follows: for the left-hand nodes, we
    create a node for every $\epsilon$ unit of total demand from each agent for
    each resource meta-type. Thus, each node is associated with some specific
    agent $i$ and resource meta-type $l$. For the right-hand nodes, we create a
    node for every $\epsilon$ unit of supply of each resource type ($r\in R$).
    Note that since there is a finite number of agents and resource types, there
    exists an $\epsilon$ small enough that it can perfectly divide up all the
    demands and supplies, assuming that all the weights are rational.

    Next, we create an edge between each pair of left and right-hand side nodes
    if and only if the supply side node belongs to the demand group of that
    agent for that meta-type: $r\in g^i_l$.

    Eq.\eqref{eq: sharing incentive} now tells us that for
    every subset of the demand side nodes, the number of neighbors of that
    subset is greater than or equal to the size of the subset. This is precisely
    the condition in \emph{Hall's Theorem}, which states that if this condition
    holds, then there exists a matching in the bipartite graph such
    that the demand side nodes are covered.

    Consider such a matching obtained via Hall's Theorem. We construct a
    solution $x$ by setting $x_{ir}$ equal to $\epsilon$ times the number of
    matched edges corresponding to $ir$. This yields an assignment that gives
    each agent $w_{i*}\frac{d_{il}}{d_{i*}}$ of each meta-type. By the construction of the matching this is
    a legal allocation. Then, we can set $y_t = 1$ to obtain a feasible
    solution to the optimization problem in \eqref{opt prob}.

    This means that after the first round of DRF-MT, agent $i$'s utility is at least
    \begin{equation*}
      u_i(x_i)
    = \min_{g_l \in G_i}w_{i*}\frac{d_{il}}{d_{i*}}\frac{1}{d_{il}}
    = \frac{w_{i*}}{d_{i*}} = \min\limits_{l: d_{il}\neq 0} \frac{s_{il}}{d_{il}}
    \end{equation*}

    The final expression is exactly the utility agent $i$ gets from her own supplies.
\end{proof}

\section{Experimental Setup and Additional Experiments}
\label{sec: additional experiments}
Solving for MNW is an Exponential Cone program, and solving for Discrete MNW is a Mixed Integer Exponential Cone program. Both of which did not have a reliable commercial solver until recently. This changed with the introduction of MOSEK version 9, which added support for such cones \cite{aps2020mosek}. We implemented the MNW and Discrete MNW using the MOSEK solver and our DRF-MT approach using GUROBI. We made little effort to optimize either approach beyond the off-the-shelf implementations, and all experiments are run on a 2019 16-inch Macbook Pro with a 6 core Intel i7 processor.

First we describe in more details the random instance generating procedure used in Section~\ref{experiments}. Recall that we fixed a meta-type structure. From there,
for each agent, the group structure is generated by first uniformly sample a size between $0$ and $|\Omega_l|$, and randomly pick a subset of that size from $\Omega_l$ as the demand group. The demands and
weights (before normalization) are sampled uniformly from $[1, 10]$, and the number of agents range from  $n=5$ to $n=1000$. The supply for each resource is uniformly sampled from $[n\times 500, n\times 1000]$.

\begin{figure}
    \centering
    \includegraphics[width=0.75\linewidth]{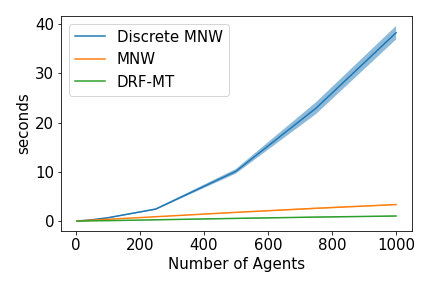} 
    \caption{Running times comparison with meta-types: $\Omega_1 = \{0\}, \Omega_2=\{1, 2\}, \Omega_3=\{3, 4, 5\}, \Omega_4=\{6, 7, 8, 9\}, \Omega_5=\{10,11,12,13,14\}$.}
    \label{fig:runtimeb} 
\end{figure}

Using the same procedure, we also compare the running times on a larger instance (more meta-types, and more types within a meta-type). As shown in Figure~\ref{fig:runtimeb}, the relative performances remain the same. 

Next, we scale up the number of meta-types instead of number of agents, even though we think it is more natural to have problems with large number of agents. Here we assume that each meta-type has five types and fix the number of agents to $50$. We see in Figure~\ref{fig:runtime meta types} again that the running time for Discrete-MNW quickly blows up while DRF-MT remains the fastest method out of the three.
\begin{figure}
    \centering
    \includegraphics[width=0.75\linewidth]{}
    \caption{Running time comparison with respect to number of resources}
    \label{fig:runtime meta types}
\end{figure}

Since Discrete-MNW runs much slower than DRF-MT and MNW, we next focus on the running time comparison of just MNW and DRF-MT in our next set of experiments. The setups are the same as the ones used for Figure~\ref{fig:main experiments} (left) and 
Figure~\ref{fig:runtime meta types}, except this time we scale the problem instances to much larger ones. We see from Figure~\ref{fig:running time 2 methods} and Figure~\ref{fig:running time 2 methods mt} that DRF-MT is 3-4 times faster than MNW in terms of running time. 
\begin{figure}
    \centering
    \includegraphics[width=0.75\linewidth]{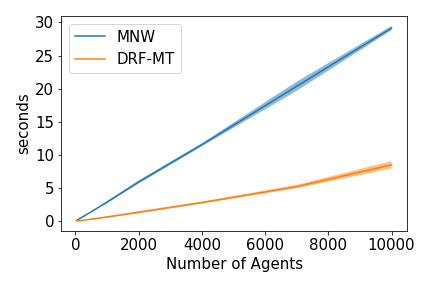}
    \caption{Running time comparison between MNW and DRF-MT. The meta types are $\Omega_1 = \{0\}, \Omega_2=\{1, 2\}, \Omega_3=\{3, 4, 5\}, \Omega_4=\{6, 7, 8, 9\}$}
    \label{fig:running time 2 methods}
\end{figure}
\begin{figure}
    \centering
    \includegraphics[width=0.75\linewidth]{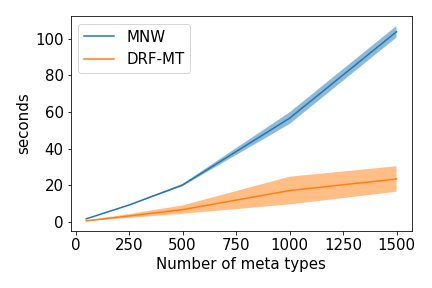}
    \caption{Running time comparison between MNW and DRF-MT with respect to number of resources.}
    \label{fig:running time 2 methods mt}
\end{figure}

Finally, as mentioned in Section~\ref{experiments}, 
normalized difference in social welfare is calculated from subtracting the social welfare of Discrete MNW from that of DRF-MT and then divide the difference by the social welfare of Discrete MNW. Here, instead of looking at the aggregated normalized difference in social welfare as in Figure~\ref{fig:main experiments} (right), we group the results by the number of agents. 
\begin{figure}[h]
      \centering
      \includegraphics[width=0.8\linewidth]{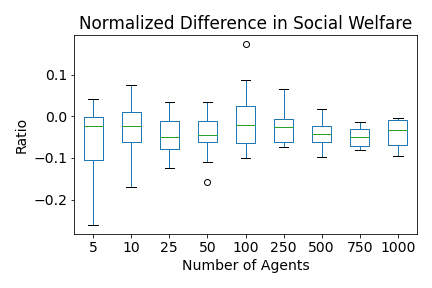} 
      \caption{Normalized difference in social welfare between Discrete MNW and DRF-MT, grouped by number of agents.} 
      \label{fig:swa} 
\end{figure}
The box plot in Figure~\ref{fig:swa} shows that in most of the trials, DRF-MT generates social welfare that is comparable to that of Discrete MNW, and sometimes even higher, with no significant variation in performance with respect to the number of agents. Readers might wonder how does Discrete DMW compare to (rounded) MNW in terms of social welfare. In Figure~\ref{fig:sw dmnw vs mnw} we see that the two have virtually the same social welfare on almost all instances, with Discrete MNW having a slight edge over MNW. This means that compared to rounded MNW, our rounded DRF-MT algorithm also achieves at least $90\%$ of the social welfare on most instances.
\begin{figure}[]
      \centering
      \includegraphics[width=0.8\linewidth]{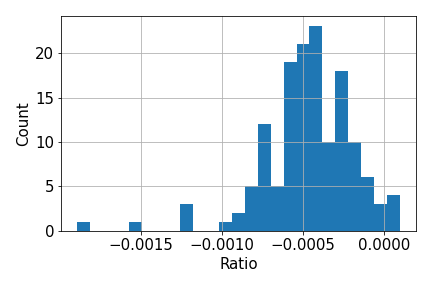} 
      \caption{Normalized difference in social welfare between Discrete MNW and MNW over all trials. Normalized difference is
    calculated by subtracting the social welfare of Discrete MNW from that of
    DRF-MT and then dividing by the social welfare of Discrete MNW.} 
      \label{fig:sw dmnw vs mnw} 
\end{figure}
% Figure~\ref{fig:swb} shows the same data aggregated into a histogram. In 95\% of the trials, DRF-MT obtained at least $90\%$ of the social
% % welfare achieved by the MNW approach. 
% Again, since DRF-MT is strategy-proof while MNW is not, we believe that this is a relatively small trade off to make for a more robust mechanism. 

\section{Beyond Sharing Incentive: Proportionality }
Proportionality is defined as follows:

\paragraph{Proportionality}
An allocation $x$ satisfies proportionality if $u_i(x_i) \geq u_i(x'_i)$ for all $i$, where $x'_{ir} = w_{il} S_r$ for each $r \in g_l^i$ and $l\in [L]$. $u_i(x')$ can be explicitly written out as 
$$u_i(x') := \min_{g_l \in G_i} \left \{\frac{w_{il}}{d_{il}} \sum_{r\in g_l}S_{r} \right \}.$$

In the existing resource allocation literature, sharing incentive and proportionality are often used interchangeably. Indeed, when the priority weights are set according to the agents' accessible contributions to the resource pool ($w_{il} = s_{il}$), the two notions are equivalent in settings where there are no accessibility constraints. With accessibility constraints, however, they are not the same. Under our definition of proportionality, the amount of accessible resource meta-type $l$ that agent $i$ receives is $s_{il}\times \sum_{r\in g_l^i}S_{r}$. Since $\sum_{r\in g_l^i}S_{r}<1$ if agent $i$ cannot access the entire supply of meta-type $l$, $u_i(x')$ can be smaller than $u_i^o$. Therefore when priority weights are set according to agents' contributions, proportionality is a weaker notion than sharing incentive. However, since proportionality can be defined for arbitrary weights, regardless of whether or not we are in a setting where agents bring their own supplies, it is a more flexible concept. 
% We focus on Sharing Incentive in the main paper and provide a detailed discussion of proportionality in the Appendix.
% As discussed in Subsection \ref{subsec:fairness definition}, proportionality is a weaker notion than sharing incentive, when weights are set according to contribution. However, when priority weights and accessibility constraints are set independently, proportionality is a more flexible concept. 
Unfortunately proportionality does not hold generally. We prove proportionality under the following assumption:
\begin{assumption}
    \begin{align*}
    &\min\limits_{N'\subseteq N, l\in\{1\ldots L\} }
      \left(\sum\limits_{r\in \cup_{i\in N'} g^i_l} S_r\right) /
      \left(\sum\limits_{i\in N'} w_{i*} d_{il}/d_{i*}\right)\\ &\geq \max \limits_{i\in N} \left\{ \min\limits_{l: d_{il}\neq 0} 
          \frac{w_{il} \sum\limits_{r\in g_l^i} S_r }{w_{i*} d_{il}/d_{i*}}
      \right\}
      \end{align*}
    \label{assum:proportionality}
\end{assumption}

\begin{lemma}
    Assume that demands, weights and supplies are all rational numbers. Then under Assumption \ref{assum:proportionality}, DRF-MT satisfies
    proportionality.
    \label{lemma: proportionality}
\end{lemma}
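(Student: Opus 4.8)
The plan is to mirror the proof of Lemma~\ref{lemma: sharing incentive}: first lower-bound the first-round LP value $y^*_1$ through a Hall's Theorem argument, and then exploit the fact that an agent's \emph{final} utility can only exceed what the first round already guarantees. Define
\[
\lambda^* := \min\limits_{N'\subseteq N,\, l\in\{1\ldots L\}} \left(\sum\limits_{r\in \cup_{i\in N'} g^i_l} S_r\right) / \left(\sum\limits_{i\in N'} w_{i*} d_{il}/d_{i*}\right),
\]
which is precisely the left-hand side of Assumption~\ref{assum:proportionality}. The first step is to establish $y^*_1 \geq \lambda^*$.

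To do this I would reuse the bipartite-graph construction of the sharing-incentive proof almost verbatim, now treating $\lambda^* w_{i*} d_{il}/d_{i*}$ as the target demand of agent $i$ for meta-type $l$. Since all data are rational, choose $\epsilon$ small enough to divide every supply $S_r$ and every target demand evenly; create one left node per $\epsilon$-unit of demand (tagged by a pair $(i,l)$), one right node per $\epsilon$-unit of supply (tagged by a resource type $r$), and an edge whenever $r \in g_l^i$. By the very definition of $\lambda^*$ as a minimum of ratios, for every demand-node subset the accessible supply dominates the demand, i.e.\ $\sum_{r\in \cup_{i\in N'} g^i_l} S_r \geq \sum_{i\in N'}\lambda^* w_{i*} d_{il}/d_{i*}$ for all $N',l$, which is exactly Hall's condition. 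The resulting matching yields a feasible allocation giving each agent $\lambda^* w_{i*} d_{il}/d_{i*}$ of every demanded meta-type, so $(x,\, y_1 = \lambda^*)$ is feasible for \eqref{opt prob} in round~1 and hence $y^*_1 \geq \lambda^*$.

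Next I would rewrite the right-hand side of Assumption~\ref{assum:proportionality}. Because $d_{i*}/w_{i*}$ does not depend on $l$, pulling it out of the inner minimum (and using that $G_i$ ranges exactly over $\{l : d_{il}\neq 0\}$) gives
\[
\max_{i\in N}\min\limits_{l: d_{il}\neq 0} \frac{w_{il}\sum_{r\in g_l^i} S_r}{w_{i*} d_{il}/d_{i*}} = \max_{i\in N} \frac{d_{i*}}{w_{i*}}\, u_i(x'),
\]
where $u_i(x')$ is the proportional-share utility. Chaining with the assumption, for every fixed $i$ we obtain $y^*_1 \geq \lambda^* \geq \max_j \frac{d_{j*}}{w_{j*}} u_j(x') \geq \frac{d_{i*}}{w_{i*}} u_i(x')$.

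Finally, every agent is active in round~1, so the LP guarantees agent $i$ at least $y^*_1 w_{i*} d_{il}/d_{i*}$ of each meta-type $l$, i.e.\ utility at least $y^*_1 w_{i*}/d_{i*}$; by Fact~\ref{fact:monotonicity of lp over rounds} this guarantee only grows as rounds proceed, so the final allocation satisfies $u_i(x_i) \geq y^*_1 w_{i*}/d_{i*}$. Combining with the previous bound yields $u_i(x_i) \geq \frac{w_{i*}}{d_{i*}}\cdot \frac{d_{i*}}{w_{i*}} u_i(x') = u_i(x')$, which is proportionality. The main obstacle is the Hall's Theorem step, but since it is essentially identical to the sharing-incentive construction the only genuinely new observation is that a \emph{single} scalar $y^*_1$ can simultaneously dominate all the agent-specific proportional shares; this is exactly what Assumption~\ref{assum:proportionality} buys, by pitting $\lambda^*$ against the worst agent's normalized proportional share.
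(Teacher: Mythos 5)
Your proof is correct and follows essentially the same route as the paper's: a Hall's Theorem construction (reused from the sharing-incentive proof) to lower-bound the first-round LP value, followed by Fact~\ref{fact:monotonicity of lp over rounds} and the observation that Assumption~\ref{assum:proportionality} makes this bound dominate every agent's normalized proportional share. The only cosmetic difference is that you scale the Hall demands by the left-hand side $\lambda^*$ of the assumption while the paper scales by the right-hand side $\hat y \leq \lambda^*$; the chain of inequalities is otherwise identical.
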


The proof is very similar to that of Lemma \ref{lemma: sharing incentive}, but we first give some intuition for Assumption~\ref{assum:proportionality}.

Since $\sum\limits_{r\in g^i_l} S_r \leq 1$ for every $i, l$, and $\min\limits_{l} \frac{w_{il}}{d_{il}} = \frac{w_{i*}}{d_{i*}} $ for every $i$, the right hand side of Assumption \ref{assum:proportionality} is upper bounded by 1. $\sum\limits_{r\in \cup_{i\in N'} g^i_l} S_r$ is the union of the acceptable supply of resource meta-type $l$ from every agent in $N'$. $\sum\limits_{i\in N'} w_id_{il}/d_{i*}$ is the total weighted demand from agents in set $N'$. Note that $d_{il} \leq d_{i*}$. 

So, what the condition says intuitively is that whenever there is a group of agents who have a large combined weighted demand on meta-type $l$, they need to also collectively have access to/accept a large fraction of the total supply of $l$. 

% Note that the weight of the agent only contributes to the denominator on the left hand side if the agent has significant demand over the resource meta-type $l$. 
% Note that the utility of a proportional allocation for agent $i$ is $\min\limits_{l: d_{il}\neq 0} \frac{\sum\limits_{r\in g_l^i} w_i S_r}{d_{il}}$. So for a fixed agent $i$, the RHS is her utility of a  proportional allocation divided by $w_i/d_{i*}$.

To provide more intuition for this assumption, we look at two examples. First we check that Example \ref{ex:example} satisfies Assumption \ref{assum:proportionality}. The RHS of the assumption evaluates to $1$ (with hospital $1$ and resource meta-type $1$). One can check that the minimum on the LHS is achieved by picking $N' = N$ and $l=1$ which gives us $\frac{16}{15} > 1$. Thus Assumption \ref{assum:proportionality} is satisfied. The resulting allocation and utilities using DRF-MT is given in Table \ref{table:example 1}. Clearly our allocation is better for everyone than the proportional allocation. 

\begin{table}
\centering
\begin{tabular}{ |c|c|c|c| } 
 \hline
       \textbf{\shortstack{DRF-MT\\Allocations}}& \shortstack{Hospital 1\\($w_1= 1/4$)} & \shortstack{Hospital 2\\ ($w_2=1/4$)} & \shortstack{Hospital 3 \\($w_3=1/2$)}\\ 
       \hline
 Doctor A &  &  100 & 400\\ 
 \hline
 Doctor B & 400 &  & 100\\ 
 \hline
 Nurse  C & 100 & 400 &\\ 
 \hline
 Nurse  D &  &  & 500\\ 
 \hline
 \hline
\textbf{Utilities} & & &\\
 \hline
 DRF-MT & 100 &100 &500\\
 \hline
 Proportional & 62.5 & 31.25 &250 \\ 
 \hline
\end{tabular}
\caption{Allocations from DRF-MT in Example \ref{ex:example} and the comparison of the resulting utilities with utilities of proportional allocation.}
\label{table:example 1}
\end{table}

However, by adjusting the weights of the hospitals we can also construct an example that does not satisfy Assumption \ref{assum:proportionality}. 
Take the same parameters of~Example \ref{ex:example} with the following modification on weights: $w_1 = 0.49$, $w_2 = 0.49$, $w_3 = 0.02$. The RHS value of Assumption \ref{assum:proportionality} does not change. However, because the weights of hospitals $1, 2$ now dominate the market, the minimum of LHS is achieved with $N' = \{1,2\}$ and $l=2$, which gives us $\frac{1/2}{0.49\times 1 + 0.49\times 1/4} < 1$. So the assumption is violated. Intuitively, the problem with this setup is that even though hospitals 1 and 2 account for vast majority of the weighted demand for the nurse meta-type, they are both severely constrained to the same half of the total supply of nurses. 

Under this setup, the DRF-MT assignments/utilities do not change. With proportional allocation however, the utilities for the three agents are 
$[122.5, 61.25, 10.0]$. So Agent 1 received more utility under proportional allocation than the allocation given by DRF-MT, at the expense of significantly hurting the social welfare: the sum of the utilities is less than $200$, compared to $700$ generated by the DRF-MT allocation. Now we prove Lemma~\ref{lemma: proportionality}
\begin{proof}
    Let $\hat y$ denote the RHS of Assumption \ref{assum:proportionality}. After rearranging we have for all $N'\subseteq N$ and $l\in\{1\ldots L\}$:
    $$
    \sum\limits_{r\in
    \cup_{i\in N'} g^i_l} S_r\geq
     \hat y\left(\sum\limits_{i\in N'} w^i_* d_{il}/d_{i*} \right) 
    $$

    For every agent $i$ and every meta-type $l$, consider $\hat y w_{i*} d_{il}/d_{i*}$ as the ``total demand'' for resource meta-type $l$ from agent $i$.

    Then we construct a bipartite graph and apply Hall's theorem the same way as in the proof of Lemma~\ref{lemma: sharing incentive}. This yields an assignment that gives
    each agent at least $\hat y w_{i*} d_{il}/d_{i*}$ of each meta-type.
    By the definition of $\hat{y}$, it follows that the utility of each agent after the first round is at least:

    \begin{align*}
    \frac{d_{il}}{d_{i*}} \hat y w_{i*}\frac{1}{d_{il}}
    = \frac{w_{i*}}{d_{i*}} \hat y
    &\geq \frac{w_{i*}}{d_{i*}} \min\limits_{l: d_{il}\neq 0}\frac{w_{il}\sum\limits_{r\in g_l^i} S_r}{w_{i*}d_{il}/d_{i*}} \\ &= \min\limits_{l: d_{il}\neq 0} \frac{w_{il}\sum\limits_{r\in g_l^i} S_r}{d_{il}}
    \end{align*}
    Note that the right-most quantity is the utility of the proportional allocation.
    This means that after the first round, every
    agent already achieves at least as much utility as the proportional
    allocation. 
    Fact~\ref{fact:monotonicity of lp over rounds} finishes the proof.

\end{proof}

\section{Alternative Design of DRF-MT}
\label{sec: alternative algo design}
Instead of the algorithm described in Section~\ref{algorithm}, an alternative is to make each remaining agent receive a $y_t \times w_{i*}\times d_{il}/d_{i*}$ fraction of the total supply from \emph{each of its resource group} $g_{l_i^*}$ (instead of the supply from the entire meta-type). This idea might seem more intuitive since agent $i$ can only derive utilities from resources in $g_{l_i^*} \subseteq \Omega_{l_i^*}$. To do so, we can multiply the left hand side of the allocation constraints in Equation \ref{opt prob} by $\sum_{r\in g_l^i}S_r$. 
This alternative setup, however, does not lead to a mechanism with envy-freeness and strategy-proofness.

As a simple example, assume that there are five agents $1,2,3,4,5$ of equal weights, one meta-type, and two resource types $A, B$ that fall under this meta-type, with equal supply. Agent $1,2$ accept only type $A$; agent $3,4,5$ accept only type $B$. With simple calculation, we have that the largest $y_1$ we can get is 1/3: everyone receives $1/3$ of their accepted supply. The only possible allocation to achieve that is by assigning 1/3 of A each to agents 1,2, and 1/3 of B each to agents 3, 4, 5. However, if agent $2$ strategically stated that he could take both $A$ and $B$, the resulting allocation would be assigning 1/3 of A to agent 1, 2/3 of A to agent 2, and 1/3 of B each to agents 3, 4, 5. In this new allocation, the largest $y_1$ is still $1/3$, but since the total accepted supply for agent 2 is larger, he receives more. Furthermore, agent 1 would now envy agent 2.

% \end{document}

\end{document}